\numberwithin{equation}{section}
\theoremstyle{plain}
\newtheorem{thm}{Theorem}[section]
\newtheorem{korr}[thm]{Corollary}
\newtheorem{prop}[thm]{Proposition}
\newtheorem{lemma}[thm]{Lemma}
\newtheorem{defi}[thm]{Definition}
\newtheorem*{defi*}{Definition}
\newtheorem{remark}[thm]{Remark}
\newtheorem*{algorithm}{Algorithm}
\newcolumntype{L}{>{\centering\arraybackslash}m{1.5cm}}
\newcommand{\eb}{\Leftrightarrow}
\newcommand{\pil}{\rightarrow}
\newcommand{\R}{\mathbb{R}}
\newcommand{\fFDR}{\mathrm{fFDR}}
\newcommand{\for}{\text{ for }}
\newcommand{\tss}{\alpha} %Threshold symbol
\newcommand{\dss}{\alpha} %Threshold symbol for discrete case
\newcommand{\bt}{\mathrm{\mathbf{T}}}
\newcommand{\E}{\mathrm{E}}
\newcommand{\de}{\: \mathrm{d}}
\title{False Discovery Rate for Functional Data}
\author{Niels Lundtorp Olsen\textsuperscript{1}, Alessia Pini\textsuperscript{2} and Simone Vantini\textsuperscript{3}}
\begin{document}

\vfill

\maketitle

\begin{abstract}
Since Benjamini and Hochberg introduced false discovery rate (FDR) in their seminal paper, this has become a very popular approach to the multiple comparisons problem. An increasingly popular topic within functional data analysis is local inference, i.e., the continuous statistical testing of a null hypothesis along the domain. The principal issue in this topic is the infinite amount of tested hypotheses, which can be seen as an extreme case of the multiple comparisons problem.

In this paper we define and discuss the notion of false discovery rate in a very general functional data setting.
Moreover, a continuous version of the Benjamini-Hochberg procedure is introduced along with a definition of adjusted p-value function. Some general conditions are stated, under which the functional Benjamini-Hochberg  procedure provides control of the functional FDR.
Two different simulation studies are presented; the first study has a one-dimensional domain and a comparison with another state of the art method, and the second study has a planar two-dimensional domain.

Finally, the proposed method is applied to satellite measurements of Earth temperature. In detail, we aim at identifying the regions of the planet where temperature has significantly increased in the last decades. After adjustment, large areas are still significant.
\end{abstract}

\thispagestyle{empty}

\vfill
\noindent
\textbf{1} Department of Mathematical Sciences, \textit{University of Copenhagen}
\\
\textbf{2} Department of Statistical Sciences, \textit{Università Cattolica di Sacro Cuore}
\\
\textbf{3} Department of Mathematics, \textit{Politecnico di Milano}

\section{Introduction}
Statistical inference, and in particular hypothesis testing, is central in the field of statistics. 
Functional data analysis (FDA) often deals with ``summary statistics'' such as identifying  mean curves\slash trajectories and principal modes of variation, including clustering and classification. 
Yet hypothesis testing is still a key part of FDA,  where it is often done in conjunction with \emph{functional regression}. Indeed, it is important to assess whether there is a significant effect of a covariate on the response, where covariates and response variables can be functions or scalars, depending on the setup.

In the common case of a functional response and scalar covariates, 
functional regression is usually modelled through a linear model on the form $y_i = A x_i + \text{noise}$, where $y_i$ is a curve belonging to a suitable function space e.g. $L^2[0,1]$,  $x_i$ are covariate(s), and $A \in L$ is a linear operator which ought to be estimated. 
In the simple case of one covariate, the null hypothesis would be asking whether $A = 0$, and more generally the null hypothesis would be asking if $A \in U$ for a given subspace $U \subset L$.

This is a hard question to ask for functional data and is not that frequently encountered. It is absent in some textbooks \citep{Ramsay, berlinetAgnan}  but has extensive treatment in \cite{horvathK2012}.
There are various approaches to this issue, which importantly also depend on the scope of the test. We will distinguish between  two kinds of tests, (i) \emph{global tests}: does covariate $x$ have  `influence' on curve $\theta$ in at least one part of the domain of $\theta$, and (ii) \emph{local tests} or \emph{domain selections}: if covariate $x$ has an influence, which part(s) of the domain of $\theta$ are significantly affected? 
 
\paragraph{Global tests}
Global tests have been studied  by various authors, for references see e.g. \cite{horvathK2012}. However, a crucial feature is that many of the available methods rely on (strong) parametric assumptions on the data distribution, such as Gaussianity\footnote{Note: This includes independence of PC scores (among other things)}, which may be valid asymptotically but can be problematic for the usually small sample sizes and infinite dimensions that are characteristic of functional data analysis.

Non-parametric approaches such as permutation tests are popular alternatives to parametric tests. Curves are (randomly) permuted wrt. likelihood-independent transformations, and some suitable test statistic (e.g. deviation from the mean) is evaluated for each permutation. 
However this is  computationally expensive, and permutation tests are only asymptotically exact in the presence of several covariates.

\paragraph{Local tests}
Local tests have not been studied to the same extent as global tests. In functional data analysis performing local inference carries several issues. The most important one is how to define, and how to control the probability of committing a type 1 error globally over the whole domain. 
One recent framework for doing local testing on functional data is the \emph{Interval-wise testing} (IWT) introduced by \cite{pv2017interval} and extended to linear models by \cite{abramowiczmox}. 
These procedures perform non-parametric inference based on permutation schemes and provide (asymptotical) control of the family-wise error rate (FWER) on each sub-interval of the domain. In detail, the probability of falsely selecting at least part of an interval where the null hypothesis is not violated is controlled.
Another procedure proposed in the literature, and focusing on the FWER control is the Fmax-procedure \citep{holmes1996, winkler2014}.  The Fmax-procedure is a  method that provides strong control of the FWER, and like the IWT-framework it is based on permutation tests. The procedure is multivariate in nature, but it can also be applied to functional data discretized on a fine grid. 

An alternative approach to local testing in functional or spatial data is to use discrete features of the observed functions such as local maxima or zero values. That is, instead of assessing a continuum of tests, one selects a finite number of data features for testing. 
Using discrete features has some challenge with respect to interpretation,  as one has to specify when two different observations can be considered instances of the same discrete feature, and there are no obvious definitions of domain selection.  \citep{cheng2017, schwartzman2011} present some interesting methods with this, where they also proof control of the false discovery rate (FDR). 

In this work we likewise focus on the control of the FDR. In particular, we focus on defining a continuum of statistical tests over the domain of functional data, and secondly we propose a generalization of the Benjamini-Hochberg procedure \citep{BH1995} for adjusting such tests in order to control the FDR.

\medskip 
%\subsection{Outline of paper}
The remainder of the paper is organised as follows: 
Section \ref{fdr-multi} describes false discovery rate in the multivariate case and reviews related work. 
Section \ref{fdr-fda-afsnit} presents the novel work of functional false discovery rate and the functional Benjamini-Hochberg procedure.
Two different simulation studies are presented in Section \ref{overafsnit-simulation}, 
and in Section \ref{klimadata-afsnit} we apply  the proposed procedure to a data set on climate change. Finally in Section \ref{diskussions-afsnit} we highlight and discuss important points of this article. Proof of the main theorems are provided in the appendix.

\section{False discovery rate for multivariate data} \label{fdr-multi}

\paragraph{Background}
\label{lit-review}

Multiple testing is a central topic in statistics. Observed within virtually every area of statistics, it is fundamental in many statistical applications and multiple testing is recognised as an important statistical issue within many sciences.

Many ways to deal with multiple testing have been proposed with various advantages and disadvantages -- one popular approach is to use the \emph{False Discovery Rate} (FDR) \citep{BH1995}. The false discovery rate is the expected proportion of false rejections (``discoveries'') among all among rejected hypotheses.
The procedures controlling the FDR are generally more powerful than the ones controlling the FWER.

FDR is often applied in cases when a single or comparatively few false positives is not considered a serious issue, as long as their rate among all discoveries can be controlled. %, and one retain a large power for  thousands of tests. 
In \citep{BH1995} the \emph{Benjamini-Hochberg \emph{(BH)} procedure} for controlling  FDR  is introduced. In the succeeding literature, a number of other procedures for controlling FDR have been proposed (see \cite{heesen2015} for a discussion). 
The paper  \cite{BY2001} is important, as it introduces a modification of the BH procedure that controls FDR without specifying any dependency assumptions. More importantly, they show that the original procedure introduced in \cite{BH1995}  controls FDR under a weaker assumption than independence, namely \emph{positive regression dependence} on the subset of true null hypotheses (PDRS). 

Other closely related quantities for assessing the errors within the paradigm of
multiple testing have been proposed such as the \emph{weighted false discovery rate} (WDFR) \citep{benjamini1997}, (see below) the \emph{positive false discovery rate} \citep{storey2003pfdr}, and the \emph{local false discovery rate} \citep{efron2001etal}.
 
However, in this paper we will only focus on the BH procedure and FDR which -- due to its simple interpretation and definition -- is still the most popular method for multiplicity correction. 
False Discovery Rate is only defined for finite numbers of hypotheses, and one must be careful when defining FDR on infinite sets of hypotheses.

\paragraph{False discovery rate} \label{fdr-def-afsnit}
Assume we are given a set of $m$ {null hypotheses}, $G_1, \dots, G_m$, each of which can either be true or false, and can either be accepted or rejected by a statistical test. 
Furthermore, let $w_1, \dots, w_m$ be strictly positive weights with $\sum w_i = 1$, which we assume are a priori known. 
This will be used in the case of weighted false discovery rate. 
These weights can e.g. be interpreted as how important the different tests are, where the "usual" false discovery rate corresponds to the case of equal weights.

\begin{defi}[False discovery rate, unweighted case]
The false discovery rate is defined as:
\begin{equation*}
    \E[Q] = \E\left[\frac{\#\{i : G_i \text{ is true but rejected}\}}
    {\#\{i : G_i \text{ is rejected}\}  } \right]
    %1(\text{any $G_i$ is rejected})
\end{equation*}
with $Q := 0$ whenever the denominator is zero. 
%The false discovery rate is the expected value of this, $\E[Q]$.
\end{defi}

\begin{defi}[False discovery rate, weighted case]
The false discovery rate in the weighted case is defined as:
\begin{equation*}
    \E[Q] = \E\left[ \frac{\sum\limits_{i : G_i \text{ is true but rejected}} w_i}
    {\sum\limits_{i : G_i \text{ is rejected}} w_i} \right] %1(\text{any $G_i$ is rejected})
\end{equation*}
with $Q := 0$ whenever the denominator is zero. %The weighted false discovery rate is the expected value of this, $\E[Q]$.
\end{defi}

\paragraph{The Benjamini-Hochberg procedure}
Let $\{p_{(i)}\}_{i=1}^m$ be the  $p$-values sorted in increasing order. Let $\{G_{(i)}\}_{i=1}^m$ be the corresponding ordering of the hypotheses and $\{w_{(i)}\}_{i=1}^m$ the corresponding ordering of weights.
The very popular and easily applicable \emph{Benjamini-Hochberg \emph{(BH)} procedure} for multiple comparison adjustment \citep{BH1995, benjamini1997} is defined as follows:

\begin{defi}[Benjamini-Hochberg procedure, unweighted case] \label{bh-def-unw}
Define
\begin{equation*}
k = \arg\max_i \left[ p_{(i)} \leq \frac{i}{m}
\dss \right]
\end{equation*}
The Benjamini-Hochberg procedure is: \emph{reject hypotheses $G_{(1)}, \dots, G_{(k)}$ corresponding to the $k$ smallest $p$-values and accept the rest.}
\end{defi}

\begin{defi}[Benjamini-Hochberg procedure, weighted case]  \label{bh-def-w}
Define
\begin{equation*}
k = \arg\max_i \left[ p_{(i)} \leq  \sum_{j: p_j \leq p(i)} w_j \dss \right]
\end{equation*}
The weighted Benjamini-Hochberg procedure is: \emph{reject hypotheses $G_{(1)}, \dots, G_{(k)}$ corresponding to the  $k$ smallest $p$-values and accept the rest.}
\end{defi}
Unlike most adjustment procedures introduced prior to this, the BH procedure is scalable: if data is duplicated such that one has twice the amount of hypotheses, the inference by using the BH procedure will still be the same. 

\paragraph{Adjusted p-values}

A concept often used in context of multiple testing is  \emph{adjusted }(or corrected)\emph{ p-values}. Informally, the adjusted p-values for a multiple testing procedure are defined as corrections $\{\tilde{p}_i\}$ of the original $p$-values such that a null hypothesis $G_i$ can be rejected at level $\dss$ if $\tilde{p}_i \leq \dss$.

The adjusted p-values for the unweighted Benjamini-Hochberg procedure are
\begin{equation*}
	\tilde{p}_{(i)} = \min(1, \tfrac{m}{i} p_{(i)}, \dots, \tfrac{m}{m-1} p_{(m-1)},  p_{(m)} ) 
\end{equation*}
where $p_{(\cdot)}$ and $\tilde{p}_{(\cdot)}$ are the order statistics of $p$ and $\tilde{p}$, respectively. By construction, the ordering is the same.

We can likewise define adjusted p-values for the weighted Benjamini-Hochberg procedure:
%Let $r_i$ be rank of the of p-values.
%
\begin{equation*}
\tilde{p}_{(i)} = \min \left\{1,  \frac{p_{(i)}}{\sum_{j: p_j \leq p_{(i)}} w_j}, 
\frac{p_{(i+1)}}{\sum_{j: p_j \leq p_{(i+1)}} w_j}, 
\dots, 
\frac{p_{(m-1)}}{\sum_{j: p_j \leq p_{(m-1)}} w_j}, 
 p_{(m)}  \right\}, \quad i = 1, \dots m %\label{bh-adjust-mv-weight}
\end{equation*}

\paragraph{PRDS and control of false discovery rate}
Benjamini and Hochberg showed in their seminal paper \cite{BH1995} that if test statistics for different hypotheses are independent, then the false discovery rate is controlled by $\frac{m_0}{m}\alpha$ where $m_0$ is the total number of correct null hypotheses. The independence assumption was later relaxed by \cite{BY2001} to \emph{positive regression dependency on one} (PRDS).

Below we define the  PRDS property and extend it to the infinite-dimensional case, which will be needed later.

\begin{defi}[Positive regression dependency on one (PRDS)] \label{pdrs-def}
Let '$\leq$' be the usual ordering on $\R^l$.
	An \emph{increasing set} $D \subseteq \R^l$ is a set satisfying $x \in D \wedge y \geq x \Rightarrow y \in D$.
	
	A random variable $\mathbf{X}$ on $\R^l$ is said to be \emph{PRDS on  $I_0$},  where $I_0$ is a subset of $\{1, \dots, l\}$, if it for any increasing set $D$ and $i \in I_0$ holds that 
\begin{equation} \label{eq-pdrs-def}
	x \leq y \Rightarrow P(\mathbf{X} \in D | X_i = x) \leq P(\mathbf{X} \in D | X_i = y)
\end{equation}

Let  $\mathbf{Z}$ be an  infinite-dimensional random variable, where instances of $\mathbf{Z}$ are functions $T \pil \R$. We say that $\mathbf{Z}$ is PRDS on $U \subseteq T$ if all finite-dimensional distributions of $\mathbf{Z}$ are PRDS. That is, for all finite subsets $I = \{i_1, \dots , i_l \} \subseteq T$, it holds that
 $Z(i_1), \dots, Z(i_l)$ is PDRS on $I \cap U$.
\end{defi}
We refer to \cite{BY2001} for a discussion on the PRDS property and how it relates to other types of dependency.

\begin{thm} \label{pdrs-thm} 
	Given a set of hypotheses $\{ H_1, \dots, H_m \}$ and corresponding p-values $(p_1, \dots, p_m) $,
	let $I_0 = \{i_1, \dots, i_{m_0} \} \subseteq \{1, \dots, m\}$ be the index set corresponding to true null hypotheses $\{H_{i_1}, \dots , H_{i_{m_0}} \}$.

	If the joint distribution of the p-values $(p_1, \dots, p_m) $ is PDRS on  $I_0$, the BH procedure controls the FDR at level $\frac{m_0}{m} \dss$ in the sense that
\[ \E [Q] \leq \frac{m_0}{m} \dss \leq \dss
\]
where $Q$ is the proportion of false discoveries.
\end{thm}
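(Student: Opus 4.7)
The plan is to follow the Benjamini--Yekutieli argument for \cite{BY2001} and reduce the FDR bound to a telescoping sum that can be controlled by the PRDS monotonicity property. First I would write
\begin{equation*}
\E[Q] = \sum_{i \in I_0} \E\!\left[\frac{\mathbf{1}\{H_i \text{ rejected}\}}{R}\right],
\end{equation*}
where $R$ is the total number of rejections, with the convention $0/0 = 0$. Using the fact that the BH procedure is a step-up procedure, on the event that $H_i$ is rejected and $R = k$ we have $p_i \leq \alpha k/m$, so each summand can be decomposed as
\begin{equation*}
\E\!\left[\frac{\mathbf{1}\{H_i \text{ rejected}\}}{R}\right]
= \sum_{k=1}^m \frac{1}{k}\, P\bigl(p_i \leq \tfrac{\alpha k}{m},\, R = k\bigr).
\end{equation*}

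Next I would introduce the auxiliary ``plug-in'' rejection count $\tilde R_i(t)$ obtained by forcing $p_i := t$ and re-running BH on the resulting vector. Monotonicity of BH in each coordinate makes $\tilde R_i(\cdot)$ non-increasing, and a short combinatorial check shows the events $C_k^{(i)} := \{\tilde R_i(\alpha k/m) = k\}$ are disjoint in $k$ and satisfy $\{R = k\} \cap \{p_i \leq \alpha k/m\} = C_k^{(i)} \cap \{p_i \leq \alpha k/m\}$. Substituting this identity turns the inner sum into $\sum_k \tfrac{1}{k}P(C_k^{(i)}, p_i \leq \alpha k/m)$, and a standard Abel-type rewriting combined with the telescoping of $1/k - 1/(k+1)$ over the nested events $\{\tilde R_i(\alpha k/m) \geq k\}$ yields the bound
\begin{equation*}
\sum_{k=1}^m \frac{1}{k}\, P\bigl(C_k^{(i)},\, p_i \leq \tfrac{\alpha k}{m}\bigr)
\leq \frac{\alpha}{m}\,\sup_{k}\frac{P\bigl(p_i \leq \alpha k/m,\, A_k^{(i)}\bigr)}{P(p_i \leq \alpha k/m)},
\end{equation*}
for a suitable nested family of events $A_k^{(i)}$.

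The main obstacle, and the place where the PRDS hypothesis enters, is showing that for each true null $i \in I_0$ and each level $k$, the conditional probability $P(A_k^{(i)} \mid p_i \leq t)$ is non-increasing in $t$ --- equivalently, that $P(p_i \leq t, A_k^{(i)}) \leq \tfrac{t}{\alpha k/m}\, P(p_i \leq \alpha k/m, A_k^{(i)})$ for $t \leq \alpha k/m$. This follows from Definition \ref{pdrs-def} applied to the increasing set determined by the complement of $A_k^{(i)}$, together with the fact that $p_i$ is stochastically larger than Uniform$[0,1]$ under the null so that $P(p_i \leq t) \leq t$. Combining the telescoping bound with this monotonicity gives $\sum_k \tfrac{1}{k} P(C_k^{(i)}, p_i \leq \alpha k/m) \leq \alpha/m$ for each $i \in I_0$.

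Summing over $i \in I_0$ gives $\E[Q] \leq m_0 \alpha / m \leq \alpha$, as required. I expect the bookkeeping of the disjoint events $C_k^{(i)}$ and the verification that the family $A_k^{(i)}$ is genuinely an increasing set in the joint vector of p-values (so that PRDS on $I_0$ applies through coordinate $i$) to be the most delicate part; the rest is algebraic manipulation of the telescoping sum.
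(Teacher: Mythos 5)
Your proposal is essentially a correct reconstruction of the Benjamini--Yekutieli argument (the plug-in rejection counts $\tilde R_i(\cdot)$, the disjoint events $C_k^{(i)}$, the Abel/telescoping step over the nested decreasing sets $A_k^{(i)}$, and the use of PRDS through coordinate $i$ together with $P(p_i \leq t) \leq t$ under the null), which is exactly the proof the paper invokes, since its own ``proof'' is simply the citation to \cite[Theorem 1.2]{BY2001}. The only presentational caveat is that your displayed intermediate bound with the supremum is not a consequence of the telescoping alone but already uses the PRDS monotonicity to discard the bracketed increments; as you note yourself, the remaining care is in checking that the $A_k^{(i)}$ are genuinely monotone sets and in passing from conditioning on $p_i = x$ to $p_i \leq t$.
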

%<sentence about unweighted>

\begin{proof}
	See \cite[Theorem 1.2]{BY2001}
\end{proof}

\section{False discovery rate for functional data} \label{fdr-fda-afsnit}

In this section we define the false discovery rate for functional data and propose a functional extension of the Benjamini-Hochberg procedure. These are the functional versions of the classical false discovery rate and the BH procedure used in the multivariate cases.

\subsection{Definition of functional false discovery rate}

Our definition of false discovery rate is related to that of \cite{sun2015etal}, although \cite{sun2015etal} uses a seemingly arbitrary lower bound for the measure of rejection region, not present in other papers on FDR, and avoids references to p-values which are natural %(albeit sometimes controversial) 
in the context of multiple testing. 

Having defined functional false discovery rate and introduced the functional Benjamini-Hochberg procedure, we proceed to define the adjusted p-value function, an important tool for practical applications, in Section  \ref{sec-justeret-p}. Section \ref{sec-finite-approx} contains a discussion and results about  finite approximations of the fBH procedure, which is easily adopted into an algorithm. Finally we prove control of the false discovery rate under regularity conditions in Section \ref{fdr-kontrol-thm}.

For the remainder of this section we assume that we have $N$ functional samples, $y_1, \dots, y_N$, with $y_i :\mathbf{T} \pil \R$, where $ \mathbf{T} \subset R^d$ is an open and bounded subset of $\R^d$.

Suppose that for each point $t \in \mathbf{T}$, we have a null hypothesis $H^0_t$ together with an alternative hypothesis $H^A_t$, that we are interesting in testing. 
Furthermore, suppose that by pointwise application of some statistical test we obtain p-values $p(t)$ for every $t$ with the property that $(H^0_t$ true$) \implies p(t) \sim U(0,1)$, where $U(0,1)$ is the uniform distribution on $(0,1)$.   
The $p$-values together make up a function $p: \bt \pil [0,1]$, the \emph{unadjusted $p$-value function} \citep{pv2017interval}.

%a brief discussion about some of the assumptions of section \ref{fdr-kontrol-thm}.
Let $U$ be the set of the domain where the null hypothesis is true, ie. $U = \{t \in \bt: H_0^t$ is true $\}$. %
Let $\nu$ be a bounded measure on $\bt$ that is absolutely continuous wrt. the Lebesgue measure, which we denote by $\mu$.  By absolute continuity we have $\nu = f \cdot \mu$ for some measurable function $f: \bt \pil [0,\infty)$. The function $f$ can be interpreted as a \emph{weight function} assigning more weight to some regions of $\bt$ than others, and is the functional counterpart of the weights used in section \ref{fdr-def-afsnit}.

Given $U$ and an instance of $p(t)$, 
let $V = \{t: H^0_t$ is true and $H^0_t$ is rejected$\}$ be the region where the null hypothesis is wrongly rejected, and let $S = \{t: H^0_t$ is false and $H^0_t$ is rejected$\}$ be the region where the null hypothesis is correctly rejected. 
The set $V$ corresponds to committing type I errors, and in a given research situation, it is desired that $V$ is as small as possible and $S$ is as large as possible.
Specifically, the functional false discovery rate can be defined as follows.
\begin{defi}[Functional false discovery rate]
Define the  \emph{functional false discovery rate} (fFDR) as 
\begin{equation}
\fFDR = \E[Q] = \E\left[\frac{\nu(V)}{\nu(V \cup S)} 1_{\nu(V \cup S) > 0} \right] \label{funk-fdr-def}
\end{equation}
where $Q$ is the \emph{proportion of false discoveries}.     
\end{defi}

\begin{remark}[False discovery rate for other manifolds] \label{mangfoldighedsremark} 
In this paper we define false discovery rate for functional data defined on open subsets of $\R^d$. However, many smooth manifolds can be diffeomorphically mapped into open and bounded subsets of $\R^d$. The mapping gives naturally rise to a measure on this set, which can be used as measure for functional false discovery rate.
\end{remark}

\subsection{The functional Benjamini-Hochberg procedure: the adjusted threshold}

Analogous to the multivariate case, we can define the fBH procedure, that is the Benjamini-Hochberg procedure for functional data. The counts and sums used in Defintions \ref{bh-def-unw} and \ref{bh-def-w} are replaced by the measure $\nu$, that incorporates any weighting of the domain $\bt$. 

The main theoretical result of this article is that the fBH procedure can be approximated by the multivariate BH procedure on the pointwise evaluations of functional data, and that it controls the fFDR by $\alpha \nu(U)/\nu(\bt)$ under regularity conditions. % (Proposition \ref{hoved-thm} for the unweigthed case and \ref{general-thm} for the weighted case). 

\begin{defi}[Functional Benjamini-Hochberg procedure]
	\label{bh-def-unweighted}
	Let $\tss \in (0,1)$ be a desired significance level for the tests. 
The functional Benjamini-Hochberg (fBH) procedure is: 

\emph{Reject 	hypotheses $H^0_t$ that satisfy}
\[
p(t) \leq \alpha^* \quad \text{where}\quad {\alpha}^* = \arg \max_r  \frac{\nu(\{s: p(s) \leq r \})}{\nu(\bt)} \geq \tss^{-1} r
\]
We will refer to $\alpha^*$ as the \emph{adjusted threshold} of the procedure, and the function $a(r) =  \nu(\{s: p(s) \leq r \})$ as the \emph{cumulated p-value function}.
\end{defi}

Two examples of the functional BH procedure are shown in Figure \ref{fig-bh-eksempel}.

\begin{figure}[!]
	%GRAFIK
	\centering
	\includegraphics[width=0.49\textwidth]{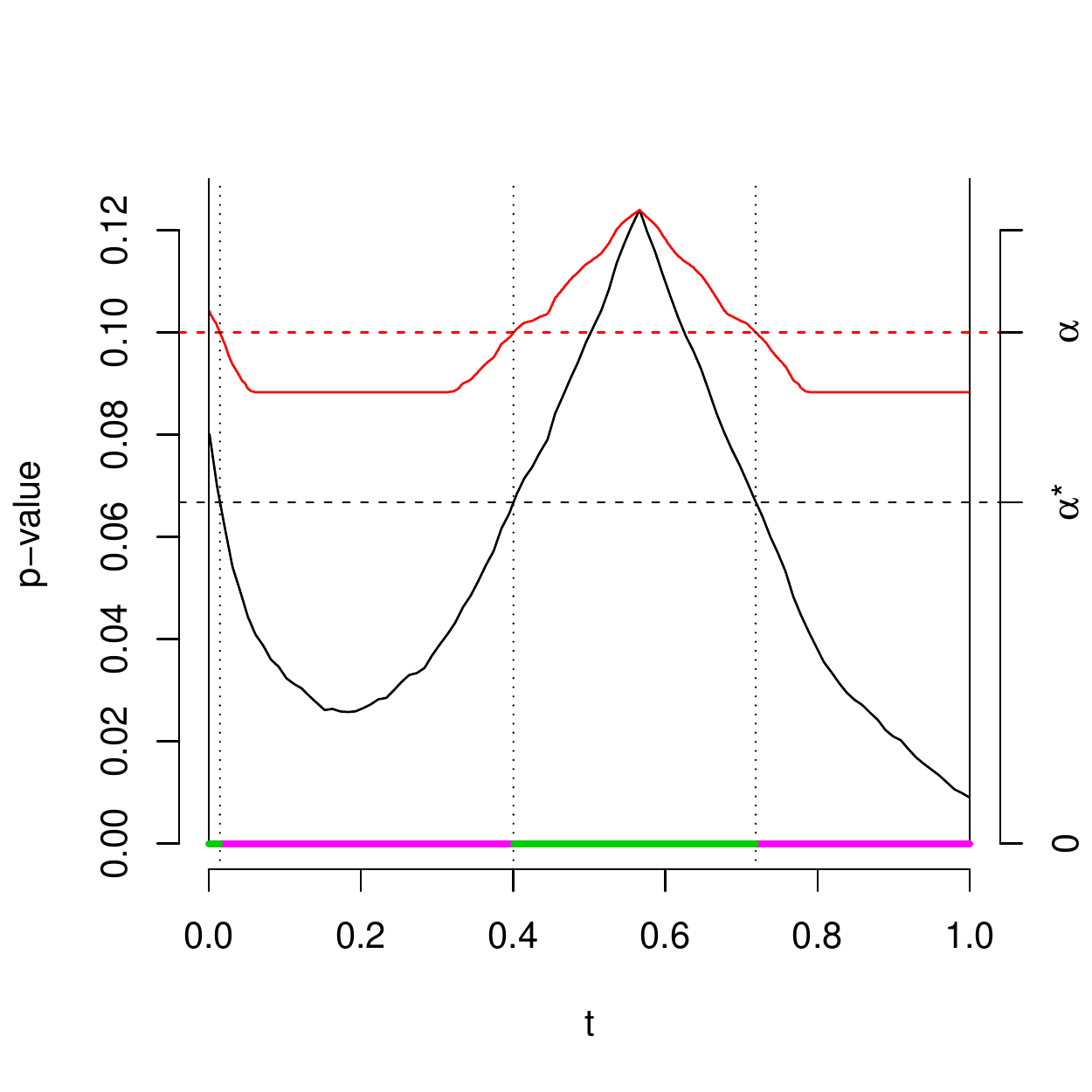}
	\includegraphics[width=0.49\textwidth]{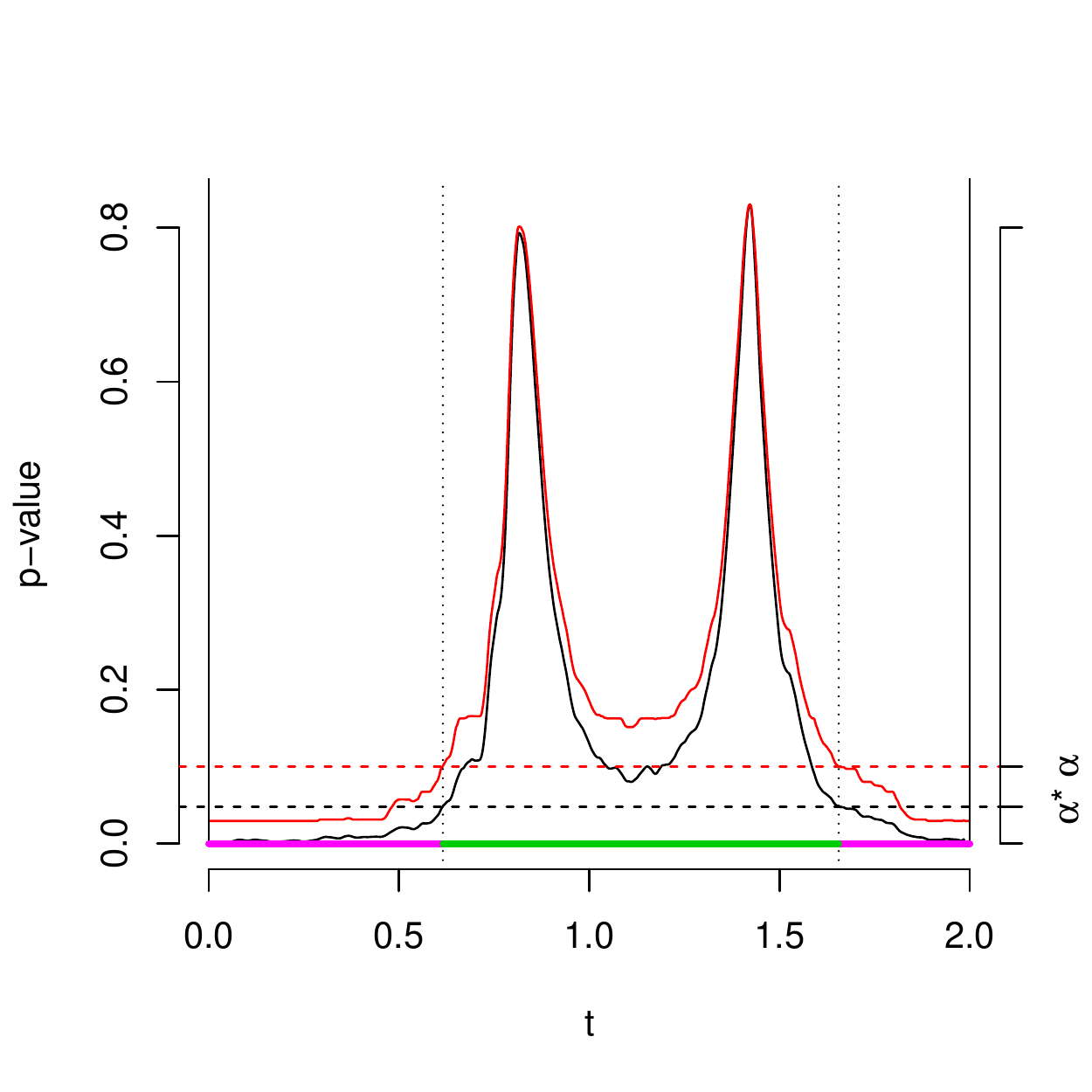} \\
	\includegraphics[width=0.49\textwidth]{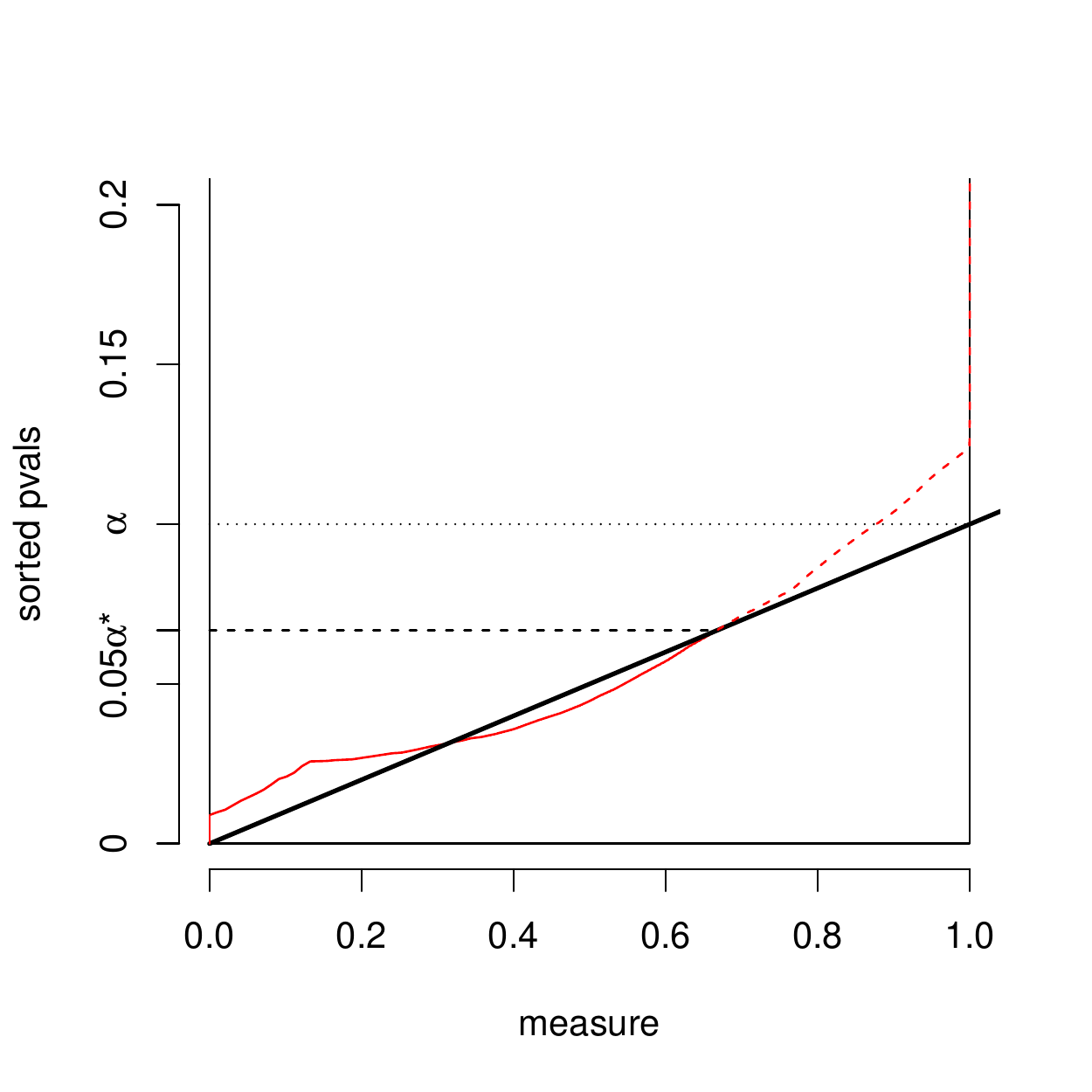}
	\includegraphics[width=0.49\textwidth]{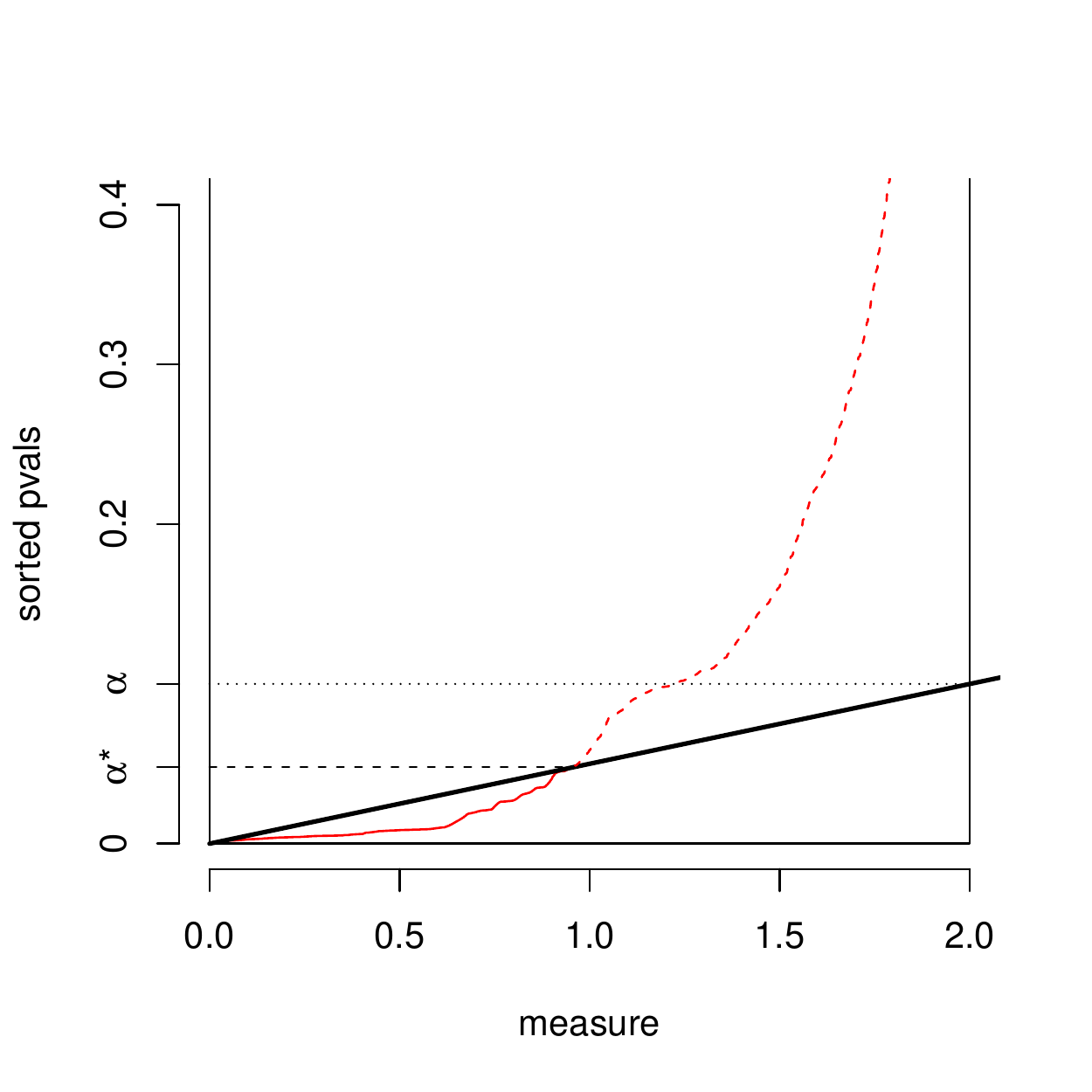} %Update label from measure to normalised measure. 
	\caption{Two illustrations of the functional Benjamini-Hochberg procedure and adjusted p-value functions with $\alpha = 0.10$. Upper plots: black curves are original p-values; red curves are adjusted p-values; green and purple indicate regions where $H^0_t$ is accepted/rejected after adjustment, respectively. \\
	Lower plots: The red lines  denote the cumulated p-value functions; the thick lines have slope $\alpha$.
  Null hypotheses corresponding to the solid red lines are rejected, while those above are accepted.} \label{fig-bh-eksempel}
\end{figure}

\subsection{The functional Benjamini-Hochberg procedure: the adjusted p-value function} \label{sec-justeret-p}
In the application of FDR-controlling procedures, it is often of interest to have the possibility of changing the threshold. 
For the functional case, this amounts to asking for which $\alpha$ that $ H_0^t$ is rejected after adjustment by the  fBH procedure. This naturally leads to the notion of the  \emph{fFDR-adjusted p-value function}:
\begin{equation*}
\tilde{p}(t) = 
\min_{ s \geq p(t)} \left\{1,  \frac{\nu(\bt)s}{\nu(r: p(r) \leq s)} \right\}
, \quad t \in \bt
\end{equation*}
This is analogous to adjusted p-values, which applies in the discrete case, and  plays a similar role, ie. if $\tilde{p}(t) \leq \alpha$ this means that $H^0_t$ will be rejected when the (weighted) fBH procedure is applied with threshold level $\alpha$. 

Adjusted $p$-values allow us to quantify significance after adjustment and to simultaneously compute rejection areas for all values of $\alpha$. By \Cref{adjust-equiv} below, adjustment using either the adjusted threshold or adjusted p-value function gives rise to the same inference.

\begin{prop} %[Thresholding using the adjusted p-value function] 
\label{adjust-equiv}
Let $\tss$ be a given threshold, and define  the adjusted threshold $\alpha^*$ according to \Cref{bh-def-unweighted}. Then adjustment using the adjusted threshold or the adjusted p-value function gives the identical results, ie.
\[ p(t) \leq \alpha^* \Longleftrightarrow \tilde{p}(t) \leq \tss, \quad t \in \bt
\]
\end{prop}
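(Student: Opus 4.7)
The plan is to verify both implications directly from the definitions by rewriting the key inequality in two equivalent forms, so that the defining condition for $\alpha^*$ matches the expression being minimized in $\tilde{p}(t)$.

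First I would rewrite the condition appearing in \Cref{bh-def-unweighted} in a more convenient form. Let $a(r) = \nu(\{s : p(s) \leq r\})$ denote the cumulated p-value function. Then the inequality $a(r)/\nu(\bt) \geq \alpha^{-1} r$ is equivalent to
\[
\frac{\nu(\bt)\, r}{a(r)} \leq \alpha,
\]
which is exactly the quantity (evaluated at $s = r$) that is minimized in the definition of $\tilde{p}(t)$. Thus $\alpha^* = \sup\{ r \in [0,1] : \nu(\bt) r / a(r) \leq \alpha \}$, and the supremum is attained because $a$ is monotone and one-sidedly continuous as a measure of a sublevel set. This observation is the main leverage.

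Forward direction ($\Rightarrow$): Assume $p(t) \leq \alpha^*$. Choose $s = \alpha^*$ in the definition of $\tilde{p}$. Since $\alpha^* \geq p(t)$, this $s$ is admissible, and by the defining property of $\alpha^*$ we have $\nu(\bt) \alpha^* / a(\alpha^*) \leq \alpha$. Hence $\tilde{p}(t) \leq \alpha$. Reverse direction ($\Leftarrow$): Assume $\tilde{p}(t) \leq \alpha$. Then there is some $s \geq p(t)$ with $\nu(\bt) s / a(s) \leq \alpha$ (or the min equals $1$, but then $\alpha \geq 1$ and $\alpha^* = 1 \geq p(t)$ trivially). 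By the characterization above, such an $s$ lies in the set whose supremum defines $\alpha^*$, so $s \leq \alpha^*$, and therefore $p(t) \leq s \leq \alpha^*$.

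The only subtle point, and the one I would treat carefully, is the behaviour of the map $r \mapsto \nu(\bt) r / a(r)$ at points where $a$ has jumps or where $a(r) = 0$: one must check that the supremum defining $\alpha^*$ is achieved (or handle it as a limit) so that the choice $s = \alpha^*$ in the forward direction is legitimate, and that the truncation at $1$ in the definition of $\tilde{p}$ does not spoil the reverse direction. Once these edge cases are dispatched by using the monotonicity of $a$ and splitting on whether the minimum in $\tilde{p}(t)$ equals $1$ or is attained by some finite $s$, the equivalence follows.
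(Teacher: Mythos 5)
Your proposal is correct and follows essentially the same route as the paper: the paper's proof is a single chain of equivalences that rewrites $a(s)/\nu(\bt) \geq \tss^{-1}s$ as $\nu(\bt)s/a(s) \leq \tss$ and then uses the characterization of $\alpha^*$ as the largest $r$ satisfying that inequality, exactly as you do. Your additional care about attainment of the supremum (via monotonicity of $a$) and the truncation at $1$ is a welcome tightening of details the paper leaves implicit, but it is not a different argument.
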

\begin{proof}
Using the definitions of the adjusted threshold and the adjusted p-value function, respectively, the result is true since
\begin{multline*}
p(t) \leq \alpha^* \eb \exists s \geq p(t):  \frac{\nu(\{r: p(r) \leq s \})}{\nu(\bt)} \geq \tss^{-1}s \Longleftrightarrow
\\
 \exists s \geq p(t):  \frac{\nu(\bt) s}{\nu(\{r: p(r) \leq s \})} \leq \tss \eb \tilde{p}(t) \leq \tss
\end{multline*}
\end{proof}

\subsection{The functional Benjamini-Hochberg procedure: Finite approximation and computational cost} \label{sec-finite-approx}

In Section \ref{fdr-kontrol-thm} (see Theorem \ref{thm-control}), we will prove that the fBH procedure controls the fFDR. Before doing so, we start defining a finite approximation of the fBH procedure, that is proven to converge to the theoretical one. 
Indeed, a crucial issue regarding the  fBH procedure is that it makes use of an infinite amount of hypotheses and p-values, which in principle is computationally unattainable. 
We present here a simple algorithm for approximating the fBH procedure. In Propositions \ref{hoved-thm} and \ref{general-thm} we show that under regularity conditions,  this algorithm approaches the fBH procedure in the limit.

We present the algorithm and proposition in two versions: a notationally simpler version with equal weights, which is a special case of the more general, weighted version.

\subsubsection{Unweighted case} In many applications there is no \textit{a priori} reason for assigning different weights to different parts of the domain, $\bt$. Thus we will assume that $\nu = \mu$ corresponding to the weight function $f$ being constant.  

\begin{algorithm}[Unweighted case]
Let $S$ be a dense, uniform grid in $\bt$. The fBH procedure can be approximated by evaluation of the `usual' BH procedure:\\
Apply the BH procedure to $\{p(t) : t \in S\}$.
\end{algorithm}
Obviously, not all grids are suitable, and what a 'dense uniform grid' constitutes, is stated in Proposition \ref{hoved-thm}. 
The typical choice of a grid would be a lattice, and by decreasing the mesh of the lattice, the approximation becomes better and better.

\begin{prop}\label{hoved-thm}
	
Let $\{S_k\}_{k=1}^\infty, S_1 \subset S_2 \subset \dots$ be a dense, uniform grid in $\bt$ in sense that $S_k$ uniformly approximates all level sets of $p$ and $p|_U$ with probability one:
\begin{equation}
P \left[ \lim_{k \pil \infty} \sup_r \frac{\#( S_k \cap \{s: p(s) \leq r \})} {\# S_k} -  \mu\{s: p(s) \leq r \} \pil 0 \right] = 1
    \label{assump1-leb}
\end{equation} and
\begin{equation}
P \left[ \lim_{k \pil \infty} \sup_r \frac{\#( S_k \cap \{s: p(s) \leq r \} \cap U)} {\# S_k} -  \mu(\{s: p(s) \leq r \} \cap U) \pil 0 \right] = 1 \label{assump2-leb}
\end{equation}
Furthermore, assume that the assumptions about the $p$-value function below hold true with probability one:
\begin{enumerate}
	\item[(a1)] All level sets of $p$ have zero measure, 
	\[ \mu\{s: p(s) = t \} = 0 \quad \forall t \in \bt
	\]
	\item[(a2)] $\alpha^* \in (0, \tss] \Rightarrow$: for any open neighbourhood $O$ around $\alpha^*$ there exists $s_1, s_2 \in O$ s.t. $ a(s_1) > \alpha^{-1} s_1, a(s_2) < \alpha^{-1} s_2$, where $a$ is the cumulated p-value function (Definition \ref{bh-def-unweighted}).
	\item[(a3)]  $[\alpha^* = 0] \Rightarrow \min p(t) > 0$. 
\end{enumerate}

Now define the k'th step false discovery proportion $Q_k$  by applying the (usual) BH procedure at level $\tss$ to $p$ evaluated in $S_k$.

Mathematically, this can be defined by  
\begin{equation}
    Q_k = \frac{\#\{t \in S_k : p(t) \leq b_k\} \cap U}{\#\{t \in S_k : p(t) \leq b_k\}}, \quad 
    b_k = \arg\max_r \frac{\# \{s \in S_k : p(s) \leq r \}}{\#S_k} \geq \alpha^{-1} r \label{qk-bk-formel}
\end{equation}
Then $Q_k$  behaves asymptotically as the functional false discovery proportion:
\begin{equation*}
    \lim_{k \pil \infty} Q_k = Q. %\label{qk-q-konv}
\end{equation*}
where $Q$ is defined as in \eqref{funk-fdr-def}.
Furthermore, if $p$ is PDRS wrt. the set of true null hypotheses with probability one, then the false discovery rate $\E[Q_k]$ is controlled by $\tss \mu(U)/\mu(\bt)$:
\begin{equation*}
     \limsup_{k \pil \infty} \E[Q_k] \leq \tss \frac{\mu(U)}{\mu(\bt)} \leq \tss %\label{qk-q-kontrol}
\end{equation*}

\end{prop}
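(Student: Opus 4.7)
The plan is to reduce the proposition to the discrete BY--BH theorem (\Cref{pdrs-thm}) applied on each grid $S_k$, via a convergence-of-thresholds argument. Write $a(r) := \mu(\{s: p(s) \leq r\})/\mu(\bt)$ and $a_k(r) := \#(S_k \cap \{p \leq r\})/\#S_k$, so that both are CDFs of probability distributions on $[0,1]$; assumption (\ref{assump1-leb}) then reads $\sup_r |a_k(r) - a(r)| \to 0$ almost surely, and (\ref{assump2-leb}) is its analogue for the empirical measure restricted to $U$. Observe that $\alpha^*$ and $b_k$ are respectively the largest $r$ at which $a$ and $a_k$ lie above the line $r \mapsto \tss^{-1} r$.

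First I would show $b_k \to \alpha^*$ almost surely. When $\alpha^* > 0$, assumption (a2) furnishes, in any neighbourhood of $\alpha^*$, points $s_1 < \alpha^* < s_2$ with $a(s_1) > \tss^{-1} s_1$ and $a(s_2) < \tss^{-1} s_2$. For $k$ large enough, uniform closeness of $a_k$ to $a$ transfers both strict inequalities to $a_k$, which sandwiches $s_1 \leq b_k \leq s_2$; since the neighbourhood is arbitrary, $b_k \to \alpha^*$. When $\alpha^* = 0$, assumption (a3) gives $\delta := \min p > 0$, so $a_k \equiv 0$ on $[0,\delta)$ and $b_k = 0$ for all sufficiently large $k$. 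Assumption (a1) then makes $a$ continuous, so $a(b_k) \to a(\alpha^*) = \mu(V \cup S)/\mu(\bt)$; combining this with (\ref{assump1-leb}) yields convergence of the denominator in \eqref{qk-bk-formel}, and the analogous computation based on (\ref{assump2-leb}) drives the numerator to $\mu(V)/\mu(\bt)$. On $\{\mu(V \cup S) > 0\}$ one divides to obtain $Q_k \to Q$; on the complementary event both $Q_k$ and $Q$ vanish by the convention.

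For the FDR bound, \Cref{pdrs-def} states precisely that an infinite-dimensional PRDS variable has all finite restrictions PRDS, so $(p(s))_{s \in S_k}$ is PRDS on $S_k \cap U$ for every $k$. Applying \Cref{pdrs-thm} to the BH procedure on the finite set $\{p(s) : s \in S_k\}$ gives the deterministic-RHS bound
\[ \E[Q_k] \leq \tss \cdot \frac{\#(S_k \cap U)}{\#S_k}. \]
Evaluating (\ref{assump2-leb}) at any $r \geq 1$ (for which $\{p \leq r\} = \bt$) yields $\#(S_k \cap U)/\#S_k \to \mu(U)/\mu(\bt)$, and taking $\limsup$ on both sides of the displayed inequality gives the claim.

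The main obstacle is the threshold convergence $b_k \to \alpha^*$: the $\arg\max$ functional defining the BH procedure is discontinuous at tangential crossings of $a$ with the line $r \mapsto \tss^{-1} r$, so uniform convergence of $a_k$ alone is insufficient. Conditions (a2) and (a3) encode exactly the transversality that restores continuity, and without them the sandwich argument above breaks.
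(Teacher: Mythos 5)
Your argument mirrors the paper's own proof step for step: uniform convergence of the empirical cumulated p-value function, the sandwich argument via (a2) giving $b_k \to \alpha^*$, the separate treatment of $\alpha^* = 0$ via (a3), convergence of the (normalised) numerator and denominator from \eqref{assump1-leb}--\eqref{assump2-leb} together with continuity of $a$ from (a1), and the reduction of the FDR bound to Theorem~\ref{pdrs-thm} applied on each $S_k$ followed by evaluating \eqref{assump2-leb} at $r=1$. The only elision is that in the $\alpha^* = 0$ case the conclusion $b_k = 0$ eventually requires not just $a_k \equiv 0$ on $[0,\delta)$ but also the compactness/uniform-gap argument showing $a_k(t) < \tss^{-1}t$ on $[\delta,1]$ (the paper's Lemma~\ref{a0-lemma}); since this is the same device you already used on $O^{C}$ in the other case, the proof is complete in substance.
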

\begin{proof}
See appendix. Also a special case of \Cref{general-thm}.
\end{proof}
From the proof of the theorem, we have the following important corollary which states that as the grid $S_k$ becomes tighter and tighter, hypotheses are eventually rejected or accepted:
\begin{korr} \label{konvergens-htk}
For $t \in \cup_{m=1}^\infty S_m$ and $k\geq 1$, 
define $H_{t,k} =  (t \in S_k) \wedge (p(t) \leq b_k)$ where 
$b_k$ is is given by \eqref{qk-bk-formel}.
%let $H_{t_k}$ be the event of rejecting hypothesis $H_0^t$ at step $k$...
That is, $H_{t,k}$ is true if the adjusted threshold at step $k$ is larger than $p(t)$. 
Assume $p(t) \neq \tss$. Eventually, $H_{t,k}$ is either rejected or accepted.
\end{korr}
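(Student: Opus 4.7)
The plan is to reduce the corollary to a single ingredient from the proof of \Cref{hoved-thm}: that the empirical cutoffs $b_k$ converge almost surely to the adjusted threshold $\alpha^*$. Once this is in hand, the behaviour of $H_{t,k}$ at the fixed grid point $t$ is determined by a one-point separation between $p(t)$ and $\alpha^*$, and the corollary becomes almost immediate.

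First I would dispose of the grid-membership indicator. Because $S_1 \subset S_2 \subset \cdots$ is nested and $t \in \bigcup_m S_m$, there exists $K$ with $t \in S_k$ for every $k \geq K$, so past that index we may read $H_{t,k}$ as simply the comparison $p(t) \leq b_k$. Next I would invoke $b_k \to \alpha^*$, which is obtained inside the proof of \Cref{hoved-thm} from the uniform convergence of the empirical cumulated p-value function \eqref{assump1-leb} together with the transversal-crossing assumption (a2); the degenerate case $\alpha^*=0$ is absorbed by (a3), which pushes $b_k$ below $\min p$ eventually. Then I would split on $p(t) < \alpha^*$ versus $p(t) > \alpha^*$: in the first case convergence gives $b_k > p(t)$ for all large $k$, so $H_{t,k}$ is eventually rejected, and in the second case $b_k < p(t)$ for all large $k$, so $H_{t,k}$ is eventually accepted. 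Since $a(r)/\nu(\bt) \leq 1$ forces $r \leq \alpha$ in the defining inequality for the adjusted threshold, we also have $\alpha^* \leq \alpha$, so the hypothesis $p(t) \neq \alpha$ at least covers the boundary case $\alpha^* = \alpha$.

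The main obstacle is the remaining edge case $p(t) = \alpha^* < \alpha$, where the hypothesis $p(t) \neq \alpha$ does not directly preclude equality with the limit. Handling it relies on the strict crossing content of (a2): on every neighbourhood of $\alpha^*$ the cumulated p-value function both strictly exceeds and strictly falls below the line $r/\alpha$, so under (a1) the empirical cutoffs cannot stall exactly at the specific level $p(t)$ for infinitely many $k$; equivalently, for any $\varepsilon > 0$ one can sandwich $\alpha^*$ between two values of $r$ on which the empirical inequality at scale $k$ must eventually have a definite sign by \eqref{assump1-leb}. Once this subtle point is disentangled, the remainder of the argument is a routine unpacking of definitions from the main proposition.
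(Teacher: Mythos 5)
Your main line of argument is exactly the paper's: the proof reduces to the convergence $b_k \to \alpha^*$ (Lemma \ref{lemma-et}, obtained from uniform convergence of $a_k$ to $a$ via \eqref{assump1-leb} together with (a2), with (a3) handling $\alpha^*=0$), after which nestedness of the grids and the dichotomy $p(t)<\alpha^*$ versus $p(t)>\alpha^*$ finish the job. This is precisely Proposition \ref{htk-konv} in the appendix. You also correctly noticed the mismatch that the paper glosses over: the corollary is stated under $p(t)\neq\tss$, but the appendix proposition (and hence the paper's actual proof) assumes $p(t)\neq\alpha^*$, and since $\alpha^*\leq\tss$ these are not the same condition.

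The gap is in your attempted repair of the case $p(t)=\alpha^*<\tss$. Assumption (a2) gives you points $s_1,s_2$ arbitrarily close to $\alpha^*$ with $a(s_1)>\alpha^{-1}s_1$ and $a(s_2)<\alpha^{-1}s_2$; combined with uniform convergence of $a_k$ this pins $b_k$ into any neighbourhood of $\alpha^*$, i.e.\ it yields $b_k\to\alpha^*$ and nothing more. It does \emph{not} force $b_k$ to settle eventually on one side of $\alpha^*$: in the generic situation $a(\alpha^*)=\alpha^{-1}\alpha^*$ exactly, the empirical curve $a_k$ can have its last crossing of the line $r\mapsto\alpha^{-1}r$ slightly above $\alpha^*$ for some $k$ and slightly below for others, so the event $p(t)\leq b_k$ with $p(t)=\alpha^*$ can flip infinitely often. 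Your appeal to (a1) and to a ``sandwich'' via \eqref{assump1-leb} does not exclude this oscillation. The honest resolution is the one the paper implicitly takes: strengthen the hypothesis of the corollary to $p(t)\neq\alpha^*$ (as in Proposition \ref{htk-konv}), rather than claim the result under $p(t)\neq\tss$ alone.
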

\begin{proof}
Proposition \ref{htk-konv}
\end{proof}

\subsubsection{Weighted case} 
Allowing the weighting function, $f$, to vary allows for more general measures, such as those  which arise from \Cref{mangfoldighedsremark}. However, the notation is more tedious, so we have separated unweighted case as a special case. 

Let $\nu = f \cdot \mu$, and assume that $f$ is a bounded and strictly positive density function $\bt \pil \R$. 
\begin{algorithm}[Weighted case]
Let $S$ be a dense, uniform grid in $\bt$. The fBH procedure can be approximated by evaluation of the 'usual' BH procedure: \\
Apply the BH procedure to $\{p(t) : t \in S\}$ with weights $(f(t): t \in S)$ where the weights have been normalised to one.
\end{algorithm}

Under almost similar assumptions to the unweighted case, we are able to control the false discovery rate at level $\alpha$:
\begin{prop} \label{general-thm}
Let $\{S_k\}_{k=1}^\infty, S_1 \subset S_2 \subset \dots$ be a dense, uniform grid in $\bt$ in sense that $S_k$ weighted by $f$ uniformly approximates all level sets of $p$ and $p|_U$ with probability one:
\begin{equation}
P \left[ \lim_{k \pil \infty} \sup_r \frac{\sum_{i \in S_k \cap \{s: p(s) \leq r \}} f(i) } {\# S_k} -  \int_{\{s: p(s) \leq t \}} f(x) \de x  \pil 0 \right] = 1
    \label{assump1-gen}
\end{equation} and
\begin{equation}
P \left[ \lim_{k \pil \infty} \sup_r \frac{\sum_{i \in S_k \cap \{s: p(s) \leq r \} \cap U} f(i) } {\# S_k} -  \int_{\{s: p(s) \leq t \} \cap U} f(x) \de x  \pil 0 \right] = 1
\label{assump2-gen}
\end{equation}

Furthermore, assume that the assumptions about the $p$-value function below hold true with probability one:
\begin{enumerate}
	\item[(a1)] All level sets of $p$ have zero measure, 
	\[ \nu\{s: p(s) = t \} = 0 \quad \forall t \in \bt
	\]
	\item[(a2)] $\alpha^* \in (0, \tss] \Rightarrow$: for any open neighbourhood $O$ around $\alpha^*$ there exists $s_1, s_2 \in O$ s.t. $ a(s_1) > \alpha^{-1}s_1, a(s_2) < \alpha^{-1}s_2$, where $a$ is the cumulated p-value function (Definition \ref{bh-def-unweighted}).
	\item[(a3)] $[\alpha^* = 0] \Rightarrow \min p(t) > 0$.
\end{enumerate}

Now define the $k$th step false discovery proportion $Q_k$  by applying the (usual) BH procedure at level $\tss$ to $p$ evaluated in $S_k$, weighted by $f$ evaluated in $S_k$.

Mathematically, this can be defined by  
\begin{equation}
    Q_k = \frac{\sum_{t \in S_k \cap U : p(t) \leq b_k} f(t)}
    {\sum_{t \in S_k : p(t) \leq b_k} f(t)}, \quad 
    b_k = \arg\max_r \frac{\sum_{\{s \in S_k : p(s) \leq r \}} f(s)}{\sum_{\{s \in S_k\}} f(s)} \geq \alpha^{-1}r
 \label{qk-bk-formel-w}
\end{equation}

Then $Q_k$  behaves asymptotically as the functional false discovery proportion:
\begin{equation*}
    \lim_{k \pil \infty} Q_k = Q. %\label{qk-q-konv1}
\end{equation*}
where $Q$ is defined as in \eqref{funk-fdr-def}.
Furthermore, if $p$ is PDRS wrt. the set of true null hypotheses with probability one, then the false discovery rate $\E[Q_k]$ is controlled by $\tss \nu(U)/ \nu(\bt)$:
\begin{equation*}
     \limsup_{k \pil \infty} \E[Q_k] \leq \tss \frac{\nu(U)}{\nu(\bt)} \leq \alpha %\label{qk-q-kontrol1}
\end{equation*}

\end{prop}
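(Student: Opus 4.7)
The plan is to prove the two conclusions in sequence: first establish the almost-sure convergence $b_k \to \alpha^*$, then deduce pointwise convergence $Q_k \to Q$ using (a1), and finally invoke a weighted analogue of \Cref{pdrs-thm} at each finite stage and pass to the $\limsup$.

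For the convergence of $b_k$, set $\hat a_k(r) = \sum_{s \in S_k,\, p(s) \le r} f(s)/\#S_k$ and $m_k = \sum_{s \in S_k} f(s)/\#S_k$. Assumption \eqref{assump1-gen} yields $\sup_r |\hat a_k(r) - a(r)| \to 0$ almost surely, and taking $r$ past $\max_{S_k} p$ gives $m_k \to \nu(\bt)$. The thresholds $b_k$ and $\alpha^*$ are characterised as the largest $r$ satisfying $\hat a_k(r)/m_k \ge r/\alpha$ and $a(r)/\nu(\bt) \ge r/\alpha$, respectively. Assumption (a2) asserts that in every neighbourhood of $\alpha^*$ the difference $a(r)/\nu(\bt) - r/\alpha$ changes sign, which is exactly the transversality needed to keep the largest crossing point stable under uniform perturbations; combined with the uniform convergence above, this forces $b_k \to \alpha^*$. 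The degenerate case $\alpha^* = 0$ is handled by (a3): the positive gap between zero and $\min p$ together with uniform convergence implies $b_k = 0$ for all sufficiently large $k$.

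Having obtained $b_k \to \alpha^*$, rewrite $Q_k$ as the ratio of $N_k := \sum_{s \in S_k \cap U,\, p(s) \le b_k} f(s)/\#S_k$ and $D_k := \sum_{s \in S_k,\, p(s) \le b_k} f(s)/\#S_k$. By (a1), the level set $\{p = \alpha^*\}$ has $\nu$-measure zero, so the population functions $r \mapsto \int_{\{p \le r\} \cap U} f \, \de \mu$ and $r \mapsto \int_{\{p \le r\}} f \, \de \mu$ are continuous at $r = \alpha^*$. Combining this with the uniform convergence statements \eqref{assump1-gen}--\eqref{assump2-gen} and with $b_k \to \alpha^*$, one obtains $N_k \to \nu(V)$ and $D_k \to \nu(V \cup S)$, whence $Q_k \to Q$ almost surely (with the convention $Q = 0$ when $\nu(V \cup S) = 0$, which is consistent with $\alpha^* = 0$ yielding empty rejection in the previous step).

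For the bound, note that $Q_k$ is, by construction, the false discovery proportion produced by the weighted BH procedure of \Cref{bh-def-w} applied to the finite family $\{p(s)\}_{s \in S_k}$ with weights proportional to $\{f(s)\}_{s \in S_k}$. The functional PRDS hypothesis passes, by the definition of multivariate PRDS within \Cref{pdrs-def}, to every finite-dimensional marginal $(p(s))_{s \in S_k}$ being PRDS on $S_k \cap U$. The weighted analogue of \Cref{pdrs-thm} then gives
\[
\E[Q_k] \le \alpha \cdot \frac{\sum_{s \in S_k \cap U} f(s)}{\sum_{s \in S_k} f(s)}.
\]
Dividing numerator and denominator by $\#S_k$ and invoking \eqref{assump1-gen} and \eqref{assump2-gen} for $r$ exceeding all $p(s)$, the right-hand side tends to $\alpha\, \nu(U)/\nu(\bt)$, yielding the claimed $\limsup$ bound. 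The main obstacle is the convergence of $b_k$: without (a2) the curves $a(r)/\nu(\bt)$ and $r/\alpha$ could meet tangentially at $\alpha^*$, in which case arbitrarily small empirical fluctuations would displace the largest crossing far from $\alpha^*$, so the transversality assumption is indispensable; the companion degenerate regime $\alpha^* = 0$ requires the separate argument using (a3).
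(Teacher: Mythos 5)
Your proposal is correct and follows essentially the same route as the paper's own proof: uniform convergence of the empirical cumulated $p$-value function, convergence $b_k \to \alpha^*$ via (a2) with the degenerate case handled by (a3), separate convergence of numerator and denominator using (a1), and the finite-dimensional Benjamini--Yekutieli bound applied to each $S_k$ followed by a limit. The only cosmetic difference is that you carry the weights $f$ through explicitly, whereas the paper writes out only the unweighted case and declares the weighted one analogous.
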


\begin{proof}
	See appendix.
\end{proof}
Analogous to \Cref{konvergens-htk} we have the following corollary which states that as the grid becomes tighter and tighter, hypotheses are eventually rejected or accepted:
\begin{korr} \label{konvergens-htk-gen}
For $t \in \cup_{m=1}^\infty S_m$ and $k\geq 1$, 
define $H_{t,k} =  (t \in S_k) \wedge (p(t) \leq b_k)$ where $b_k$ is given by \eqref{qk-bk-formel-w}.
That is, $H_t$ is true if the adjusted threshold at step $k$ is larger than $p(t)$. 
Assume $p(t) \neq \alpha$. Eventually, $H_{t,k}$ is either rejected or accepted.
\end{korr}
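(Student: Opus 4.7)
The plan is to establish almost-sure convergence of the finite-grid thresholds $b_k$ to the functional adjusted threshold $\alpha^*$, and then exploit the nestedness of the grids to conclude that the decision $H_{t,k}$ stabilises whenever $p(t)$ is separated from $\alpha^*$.

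First I would upgrade the weighted convergence \eqref{assump1-gen} to a uniform statement about the normalised empirical cumulated $p$-value function. Setting
\[
a_k(r) := \frac{\sum_{s \in S_k,\, p(s) \leq r} f(s)}{\sum_{s \in S_k} f(s)}, \qquad \bar a(r) := \frac{\nu(\{s : p(s) \leq r\})}{\nu(\bt)},
\]
applying \eqref{assump1-gen} at an $r$ for which $\{s : p(s) \leq r\}$ exhausts $\bt$ yields $(\#S_k)^{-1}\sum_{S_k} f(s) \to \nu(\bt)$, and combining this with the original uniform bound gives $\sup_r |a_k(r) - \bar a(r)| \to 0$ almost surely. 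From here I would deduce $b_k \to \alpha^*$: when $\alpha^* \in (0,\alpha]$, condition (a2) supplies, in every neighbourhood of $\alpha^*$, points $s_1 < \alpha^* < s_2$ at which $\bar a(r) - \alpha^{-1} r$ takes strictly opposite signs, and uniform convergence preserves both inequalities for $a_k$ once $k$ is large, so the argmax defining $b_k$ is forced into $[s_1, \alpha^* + \varepsilon)$. The edge case $\alpha^* = 0$ is handled by (a3): since $\min p > 0$, the empirical infimum $\min_{S_k} p$ is eventually bounded below by some $m > 0$, so $a_k \equiv 0$ on $[0,m)$ and $b_k \to 0$.

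With $b_k \to \alpha^*$ in hand the corollary follows quickly. Fixing $t \in \bigcup_m S_m$, nestedness gives $t \in S_k$ for every $k$ past some $k_0$, so from that point $H_{t,k}$ reduces to the indicator of $p(t) \leq b_k$. The substantive content of the hypothesis $p(t) \neq \alpha$ is a strict separation $p(t) \neq \alpha^*$ — values $p(t) > \alpha \geq \alpha^*$ stabilise trivially to acceptance — and this separation is preserved under $b_k \to \alpha^*$ for all sufficiently large $k$; hence $H_{t,k}$ becomes constant, rejected when $p(t) < \alpha^*$ and accepted when $p(t) > \alpha^*$. The delicate step is the convergence $b_k \to \alpha^*$ itself: uniform convergence of $a_k$ alone does not force the argmax to converge, because $\bar a$ could be tangent to the line $r \mapsto \alpha^{-1} r$ at $\alpha^*$ and then small perturbations would displace $b_k$ macroscopically. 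Assumption (a2) is precisely engineered to rule this out by forcing a strict sign change, and verifying that this sign change is captured on both sides of $\alpha^*$ uniformly in $k$ is the core of the argument.
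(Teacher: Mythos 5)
Your proposal follows essentially the same route as the paper: uniform convergence of the (normalised, weighted) empirical cumulated $p$-value function $a_k$ to $a$, the deduction $b_k \to \alpha^*$ via the sign-change condition (a2) (with (a3) covering $\alpha^*=0$), and then nestedness of the grids plus separation of $p(t)$ from the limiting threshold --- this is exactly Lemma \ref{lemma-0v2}, Lemma \ref{lemma-et} and Proposition \ref{htk-konv} in the appendix, with your normalisation of the weights $f$ supplying the ``notationally tedious'' weighted details the paper omits. One remark: your claim that $p(t)\neq\alpha$ already yields the needed separation $p(t)\neq\alpha^*$ is justified only when $p(t)>\alpha$ and would fail if $p(t)=\alpha^*<\alpha$, but the paper's own statement has the same wrinkle (its appendix proposition in fact assumes $p(t)\neq\alpha^*$), so this is a defect of the corollary's hypothesis rather than of your argument.
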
 

The results in this section are important as they outline how to do inference using the fBH procedure. The implementation is very easy and requires no additional tools besides an implementation of the multivariate BH procedure.

% ---------

\paragraph{Computational cost}
The fBH procedure adds $O(n \log n)$  computational cost to the calculation of adjusted p-values, where $n = \#S_k$ is the number of approximation points,. This extra computational cost is  due to the sorting of $p$-values, which has $O(n \log n)$ computational cost. As a default, calculation of pointwise $p$-values has complexity $O(n)$, thus the total computational cost is $O(n \log n)$.

However, sorting algorithms on modern computer are very fast, whereas calculating pointwise p-values can be comparatively slow, in particular if permutation tests are used, such as in Section \ref{sim-1d-afsnit}. Thus, the computational cost from fBH adjustment procedure is expected to be negligible  in practice. 
This was also the case for the simulation studies and the case study presented in Sections \ref{overafsnit-simulation} and \ref{klimadata-afsnit}.

\subsection{Control of false discovery rate for functional data} \label{fdr-kontrol-thm}

With \Cref{general-thm} in place, it is easy to prove that the functional Benjamini-Hochberg procedure controls FDR under PRDS and regularity assumptions. 

\begin{thm}\label{thm-control}
Assume that there exists a uniform grid $\{S_k\}_{k=1}^\infty, S_1 \subset S_2 \subset \dots$ dense in $\bt$, such that with probability one assumptions \eqref{assump1-gen} and \eqref{assump2-gen} are met. 
Furthermore, assume that $p(t)$ is PRDS wrt. the set of true null hypotheses with probability one, and that the assumptions about $p$-value function below hold true with probability one:
\begin{enumerate}
	\item[(a1)] All level sets of $p$ have zero measure, 
	\[ \nu\{s: p(s) = t \} = 0 \quad \forall t \in \bt
	\]
	\item[(a2)] $\alpha^* \in (0,\tss] \Rightarrow$: for any open neighbourhood $O$ around $\alpha^*$ there exists $s_1, s_2 \in O$ s.t. $ a(s_1) > \alpha^{-1}s_1, a(s_2) < \alpha^{-1}s_2$, where $a$ is the cumulated p-value function (Definition \ref{bh-def-unweighted}).
	\item[(a3)] $[\alpha^* = 0] \Rightarrow \min p(t) > 0$.
\end{enumerate}

Then the functional BH procedure controls FDR at level $\tss \nu(U) / \nu(\bt)$, ie. $\E[Q] \leq \tss \nu(U) / \nu(\bt) \leq \tss$, when applying the functional BH procedure at level $\tss$.

\end{thm}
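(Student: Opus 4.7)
The plan is to deduce \Cref{thm-control} as an almost immediate corollary of \Cref{general-thm}. The assumptions of the theorem (existence of the dense uniform grid satisfying \eqref{assump1-gen}--\eqref{assump2-gen}, PRDS of $p$ on the null set with probability one, and conditions (a1)--(a3)) are precisely the hypotheses needed for \Cref{general-thm}, so I can invoke it directly. This yields, with probability one: (i) $\lim_{k \pil \infty} Q_k = Q$, where $Q_k$ is defined by \eqref{qk-bk-formel-w} and $Q$ is the functional false discovery proportion from \eqref{funk-fdr-def}; and (ii) $\limsup_{k \pil \infty} \E[Q_k] \leq \tss\, \nu(U)/\nu(\bt)$.

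The remaining step is a standard measure-theoretic transfer from the asymptotic bound on $\E[Q_k]$ to a bound on $\E[Q]$. Since $Q_k$ is a proportion it takes values in $[0,1]$, so Fatou's lemma applied to the nonnegative sequence $\{Q_k\}$ gives
\[
\E[Q] \;=\; \E\!\left[\liminf_{k \pil \infty} Q_k\right] \;\leq\; \liminf_{k \pil \infty} \E[Q_k] \;\leq\; \limsup_{k \pil \infty} \E[Q_k] \;\leq\; \tss\, \frac{\nu(U)}{\nu(\bt)} \;\leq\; \tss.
\]
Equivalently, the uniform bound $Q_k \leq 1$ permits the dominated convergence theorem, yielding $\E[Q] = \lim_k \E[Q_k]$, and then (ii) gives the conclusion. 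Either route is a one-liner.

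The main obstacle in the overall argument is not in this theorem itself but has been absorbed into \Cref{general-thm}, which carries the burden of propagating PRDS and the FDR bound from the discrete sub-grids $S_k$ to the continuum limit, and of showing that the discrete false discovery proportions $Q_k$ converge to the functional one. Given that machinery, \Cref{thm-control} is essentially a reformulation: the uniform-in-$k$ bound for the discrete approximations passes to the limiting functional quantity by routine integration-theoretic arguments, and no additional structural hypothesis (beyond what \Cref{general-thm} already demands) is required.
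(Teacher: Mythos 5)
Your proposal is correct and follows essentially the same route as the paper's own proof: invoke \Cref{general-thm} to obtain $Q_k \pil Q$ almost surely together with $\limsup_k \E[Q_k] \leq \tss\,\nu(U)/\nu(\bt)$, then pass to the limit using the uniform bound $0 \leq Q_k \leq 1$ (the paper uses dominated convergence where you offer Fatou's lemma as the primary route, but you note the DCT alternative and the two are interchangeable here). No gap.
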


Note that the assumptions \eqref{assump1-gen} and \eqref{assump2-gen} are much simplified in the equal-weight case where $f \equiv 1$.

\begin{proof} For the ease of presentation, we will only show the proof in the equal-weight case, ie. when $f$ is constant (and thus can be disregarded).

From  \Cref{general-thm} and the notation of that proposition, it shows that $Q_k \pil Q$ almost surely. %Since $Q_k$ is a false discovery proportion, 

Since the PDRS assumption is fullfilled for all finite sets, we can apply  Benjamini and Yekuteli's original proposition to $S_k$ (Theorem \ref{pdrs-thm}), from which it holds that
$E[Q_k] \leq \frac{\#(S_k \cap U)}{\#S_k} \alpha$. 

As $0 \leq Q_k \leq 1$ for all $k$, it is now a simple application of the dominated convergence theorem to show that $\E[Q] \leq \alpha \nu(U)/\nu(\bt)$:
\begin{align*}
	\E[Q] &= \E[\lim_{k \pil \infty} Q_k] =  \lim_{k \pil \infty} \E[Q_k]  \\ 
	&= \limsup_{k \pil \infty} \E[Q_k]  \leq  \limsup_{k \pil \infty} \frac{\#(S_k \cap U)}{\#S_k} \tss =  \tss \frac{\nu(U)}{\nu(\bt)} \leq \tss,
\end{align*}
where the last equation follows from assumption \eqref{assump2-gen} (cf.\ its unweigthed form \eqref{assump2-leb}).
\end{proof}

%----------------------

As remarked in \cite{BY2001} the PDRS assumption is sufficient but not necessary, and \eqref{eq-pdrs-def} needs only to be true for certain sets defined in relation to the order statistics of $p$.
The details are quite technical, and we refer to \cite[Remark 4.2]{BY2001} and the general discussion of that paper.

\paragraph{Sufficient criteria for the one-dimensional case}

The assumptions (a1)-(a3) and equations \eqref{assump1-gen} and \eqref{assump2-gen} are consequences of the more simple criteria in the one-dimensional case, which must be true with probability one: 
\begin{enumerate}
	\item[(d1)] $p$ is continuous. 
	\item[(d2)] There exists a maximal number of crossings, $N_C$,  i.e.: 
	\[ \# \{s \in [0,D] |p(s) = t \} \leq N_C \quad \forall  t \in (0,1] 
	\]
	\item[(d3)]  $U$ is a finite union of disjoint intervals. 
\end{enumerate}
These criteria will generally be true for smooth curves as we typically use to model functional data. 

% --------------------------

\section{Simulations} \label{overafsnit-simulation}

In this section two different simulation studies are performed, which differ both in scope and setting -- in the first simulation study, a functional-on-scalar regression is performed using permutation tests, while in the second simulation we test for mean equal to zero using one-sided t-tests.

The first simulation study has a more theoretical flavour and compares our proposed method to the Fmax method. The second study is intended to simulate a more realistic scenario with a number of disjoint peaks, and 
the performance of our method for various levels of $\alpha$ is analysed.

\subsection{1D-simulation} \label{sim-1d-afsnit}

\paragraph{Description of simulation}
In this section we wanted to numerically assess the performance of our fBH procedure, and to compare it with the Fmax-procedure \citep{holmes1996, winkler2014}.  The Fmax-procedure is a  method that provides strong control of the family-wise error rate in a multivariate high-dimensional setting.  It can be extended to functional data by applying it to the discrete point-wise evaluations of the functional data.

We simulated functional data according to the following functional-on-scalar linear model:
\begin{equation*}
    y_i(t) = \beta(t) x_{i} + \varepsilon_i(t) \quad i=1,\ldots,n, \quad t \in (0,1)
\end{equation*} 
where $n=10$, $x_{i} = \frac{i-1}{n-1}$, and $\beta(t) = d \cdot f(t)$, with $d$ ranging from 0 to 5. 
We modelled the function $f(t)$ with a cubic B-spline expansion with 40 basis functions and equally-spaced knots. The first $h$ coefficients of the expansion were set to one, and the last $40-h$ coefficients were set to zero. 
The resulting function assumed the value 1 in the first part of the domain,  0 in the second part of the domain, with a smooth transition.
We explored three  values of the parameter $h$: $h \in \{10, 20, 30\}$.

The error functions $\varepsilon_i( t) $ were obtained by simulating the coefficients of the same cubic B-spline expansion. The 40 coefficients were sampled independently from a standard normal distribution. 
The three panels in the first column of Figure \ref{fig:simulation_1D}  show an instance of the simulated functional data with $d=5$, and $h=10, 20, 30$, respectively. The functional data are coloured in grayscale that is proportional to the value of $x_i$. The study is much similar to the simulation study presented in \cite{abramowiczmox}, but here it is of interest to vary the domain where the null hypothesis is true. 

The fBH and Fmax procedures are applied to test the following hypotheses:
\begin{equation*}
    H_t^0: \beta(t)=0; \quad H_t^1: \beta(t) \neq 0.
\end{equation*}
The unadjusted $p$-value at point $t$ was computed with a permutation test based on the \emph{Freedman and Lane} method \citep{freedman1983nonstochastic}. 

With $d\in \{0,1,\ldots,5 \}$ and $h \in \{10, 20, 30\}$, we had 18 scenarios in total, but only 16 different ones, as the scenarios were identical for $d = 0$. 
\paragraph{Simulation results}

We measured the performances of the two methods by evaluating the FWER, FDR, false positivity rate (i.e., the measure  of the incorrectly rejected part of the domain over the measure of the domain where the null hypothesis is true) and 
sensitivity (i.e., the measure  of the correctly rejected  part of the domain relative to the total measure of the domain where the null hypothesis is false).
We performed the tests at nominal level $\alpha=0.05$.

Figure \ref{fig:simulation_1D} reports the results of the simulation obtained averaging over 1000 instances. Each row of the figure report the results with a different value of $h$. Each panel on columns 2-5 reports one of the measures discussed before as a function of the parameter $d$ for the unadjusted $p$-value (black), the fBH procedure (dark grey) and the Fmax procedure (light grey).

As expected by theory, the unadjusted $p$-value function does not control the FWER nor the FDR. It controls instead the false positive rate. 
The Fmax method controls the FWER, and by consequence, also the FDR. Finally, the fBH method controls the FWER only weakly (i.e., when $d=0$ and by consequence $H_t^0$ is true for all $t$). It controls instead the FDR in all scenarios. 
We notice a trade-off between the type of control and the sensitivity of the methods. The Fmax, being provided with a stronger control, is also the less sensible to deviations from the null hypothesis. Instead, the fBH is provided with a less strong control, but it is more sensible.

\begin{figure}
    \centering
    \includegraphics[width = \textwidth]{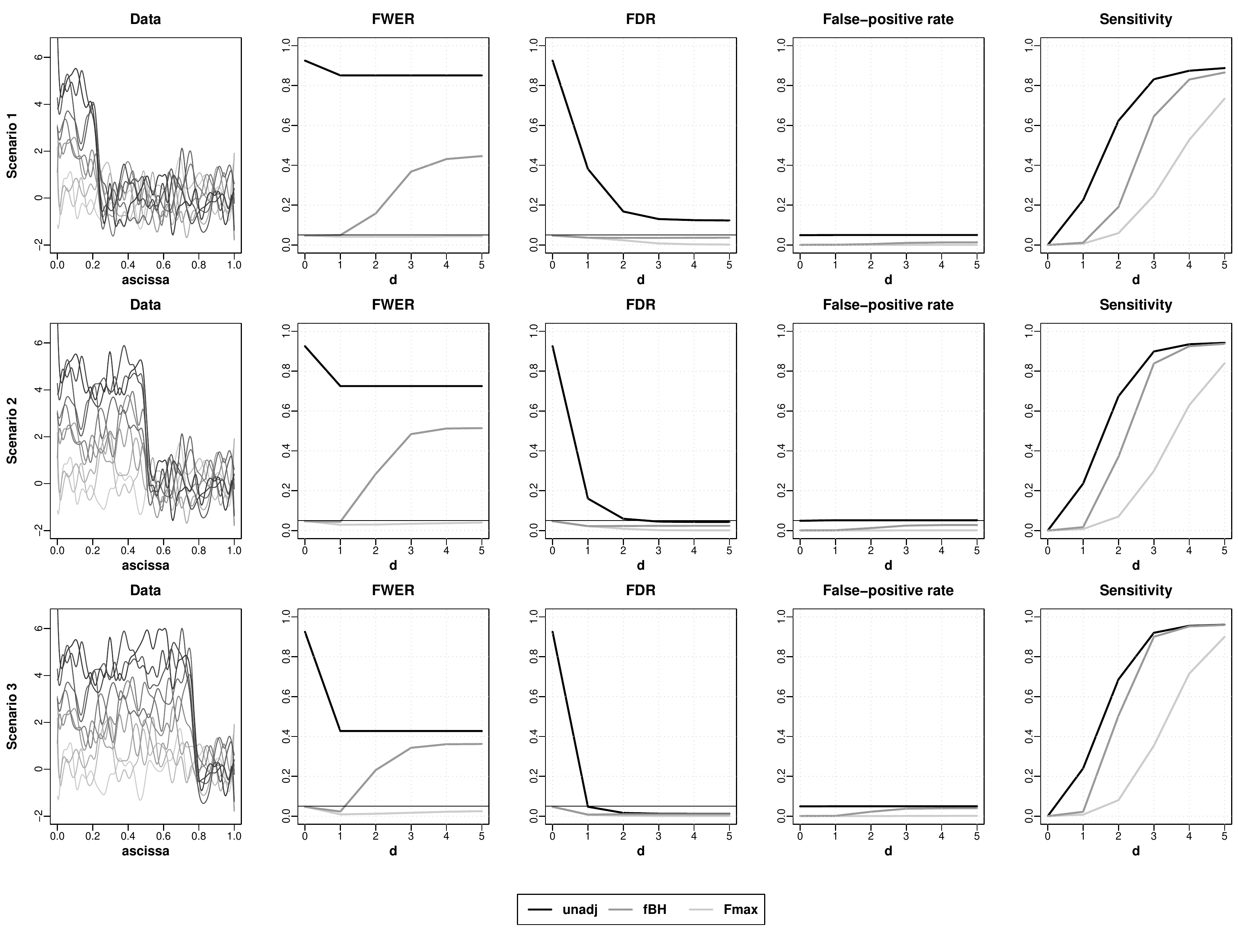}
    \caption{Simulation study: Familywise error rate, false discovery rate, false positive rate and sensitivity analysis of three methods for varying values of $d$. Scenario 1 corresponds to $h=10$, Scenario 2 to $h=20$, and Scenario 3 to $h=30$. 'Data' shown in the left panels are examples of simulated data for $d=1$.}
    \label{fig:simulation_1D}
\end{figure}

\subsection{2D-simulation} \label{afsnit-simulation}
\paragraph{Description of simulation}

The base signal $\theta$ for this simulation  consisted of 9 conical spikes with height $h = 1$ and diameter $d = 0.2$ arranged on the grid $\{0.25, 0.50, 0.75\}^2$ with the unit square $\mathbf{T} = (0,1)^2$ as doma
in. Five spikes had positives values and the remaining four spikes had negative values. 
On top we added an error signal generated as a smooth Matérn field $x_i$ with varying scale parameter.  The observed signal is $y_i = \theta + x_i$ for $i = 1, \dots, N$. See Figure \ref{sim-fields} for an illustration.

\begin{figure}
	\centering 
	\includegraphics[width=\textwidth]{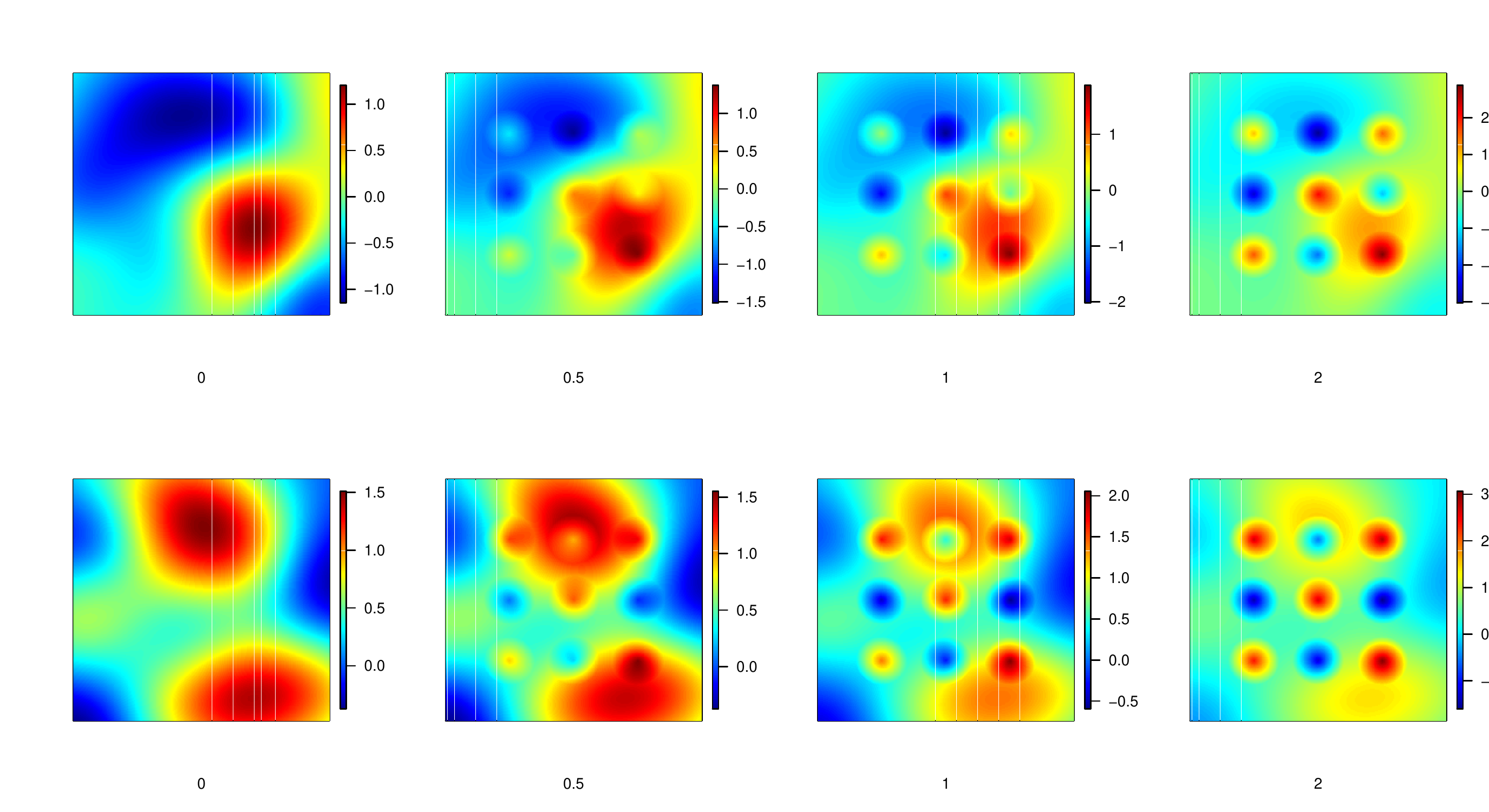}
	\caption{Two simulated fields with increasing signal strength} \label{sim-fields}
\end{figure}

The simulation was inspired by \cite{cheng2017}, who also studies FDR in a setting of random fields, albeit with a quite different scope. 
We tested against the pointwise null hypothesis $H_0(t): \theta(t) = 0$ using a pointwise, one-sided t-test over a fine lattice.

The Benjamini-Hochberg procedure was applied, and false positive rate (FPR), false discovery rate and sensitivity values were evaluated, along with false discovery rate for the unadjusted p-values for comparison. We simulated 2500 samples of the error process at grid $255 \times 255$. 
We assessed performance of the test in five different setups with varying strengths of the base signal and varying numbers of samples per t-test.
Samples of the error process were recycled and used for all experimental setups, this is also the reason for the varying number of replications in table \ref{tabel-fem-pct}.

We remark that the observed data will show a "bend" at the edges of the spikes, making these easily detectable by other methods. However, the use of pointwise tests will ignore such features.

\paragraph{Simulation results}
We tested at various thresholds, $\alpha \in \{0.001, 0.01, 0.02, 0.03, 0.04, 0.05, 0.10\}$ for five different experimental setups described in Table \ref{tabel-fem-pct}; results are shown in Figure \ref{sim-results-gg}. 
The sensitivity values and false postive rate (FPR) are defined as proportions of rejected/accepted hypotheses to the total number of false and correct hypotheses, respectively, after p-value adjustment. 
$\mu(U) = 71.7\%$ of the signal was zero, thus we would expect FDR to be controlled by $0.717 \alpha$; for the 5\% threshold of table 1, this is roughly $0.036$. 

\begin{table} \centering
	\begin{tabular}{ccc|Lcccc}
			$\begin{array}{c} \text{Setup} \\ \text{no.}	\end{array}$ & $\begin{array}{c} \text{Signal} \\ \text{size}	\end{array}$ & $\begin{array}{c} \text{\# of samples} \\ \text{per test}	\end{array}$ & Re\-pli\-ca\-tions & Sensi\-tivity & FPR & FDR & $\begin{array}{c} \text{FDR,} \\ \text{unadjusted}	\end{array}$ \\ \hline 
			1 & 2.0 & 20 & 125 & 0.498 & 0.00785 & 0.0252 & 0.113\\
			2 & 2.0 & 10 & 250 & 0.226 & 0.00553 & 0.0257 & 0.135\\
			3 & 2.0 & 40 & 62 & 0.654 & 0.00810 & 0.0242 & 0.117\\
			4 & 1.0 & 20 & 125 & 0.116 & 0.00467 & 0.0249 & 0.155\\
			5 & 0.5 & 20 & 125 & 0.0065 & 0.00295 & 0.0258 & 0.235
	\end{tabular} \caption{Results from simulation at 5\% threshold with 255x255 sample points. The varying number of replications is due to the upper limit of 2500 signals in total}  
	\label{tabel-fem-pct}
\end{table}
\begin{figure} 
	\centering
	\includegraphics[width=\textwidth]{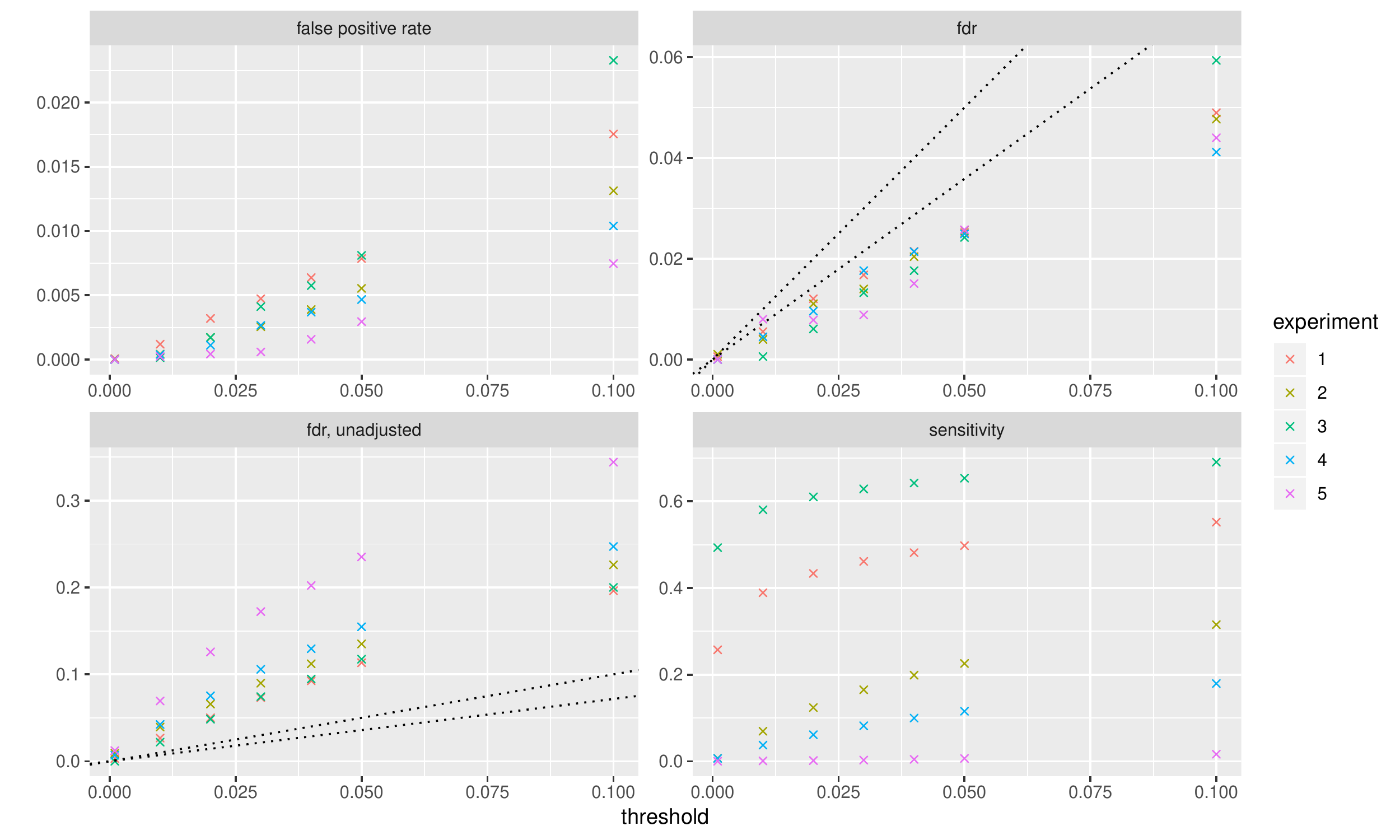}
	\caption{Average sensitivity, false positive rate (FPR) and FDR for different levels of $\alpha$ in five experimental setups as described in Table \ref{tabel-fem-pct}, along with FDR for the unadjusted $p$-values. The dotted lines have slopes $1$ and $0.717$, respectively.} \label{sim-results-gg}
\end{figure}

Unsurprisingly, sensitivity (power level), FPR and FDR increase with $\alpha$. Experiment 5 has almost no power, even at $\alpha = 0.10$, but still a comparatively large FDR, indicating that the Benjamini-Hochberg procedure is not too conservative in this setup. The false discovery rate is remarkably stable across experimental setups, and shows a linear tendency. As expected, FDR is well below $\alpha$ in all instances, and also below $\mu(U) \alpha$. In comparison, the FDR for the unadjusted p-values exceeds $\alpha$ in all setups except one, and varies considerably with the experimental setting.

\section{Application: Analysis of Climate Data} \label{klimadata-afsnit}

Climate change is a huge issue, both politically and scientifically. 
The main issue are increasing temperatures with many adverse effects on weather and climate. 
Knowing that temperature has increased significantly on a global scale, we wanted to test where on Earth temperature has increased.

\subsection{Data and model}

Data consists of yearly averages of temperatures, starting in 1983 and ending in 2007, for each $1^\circ \times 1^\circ$ tile on Earth, using standard latitudes and longitudes. Temperatures are satellite measurements collected by NASA. These data were obtained from the NASA Langley Research Center Atmospheric Science Data Center Surface meteorological and Solar Energy (SSE) web portal supported by the NASA LaRC POWER Project. \footnote{\url{http://eosweb.larc.nasa.gov}}

One crucial feature is that data was more densely sampled closer to the poles than close to Equator. Naturally, data also exhibits behaviour depending on local geography. Rather than viewing data as truly areal, we considered data to sit on the midpoints of the tiles, e.g. $(73.5^\circ S, 24.5^\circ W)$ corresponds to the tile $(74^\circ S, 73 ^\circ S) \times  (24^\circ W, 25 ^\circ W)$.

We applied the linear regression model $y_{st} = a_s + b_s \text{year}_t + \epsilon_{st}, s \in S^2, t \in \{1983, \dots, 2007 \}$, testing for \emph{positive trend}, i.e. $H^0_s: b_s = 0$ with alternative hypothesis $H^A_s: b_s > 0$.
Figure \ref{fig-temp} displays the temperature changes,  values of the pointwise t-statistics and corresponding (unadjusted) p-values. Observe how much the test statistic varies across the globe, and how differences between land and ocean are more visible in the lower the plot compared to the upper plot.
\begin{figure}[!htbp]
	\centering
	\includegraphics[width=0.7\textwidth]{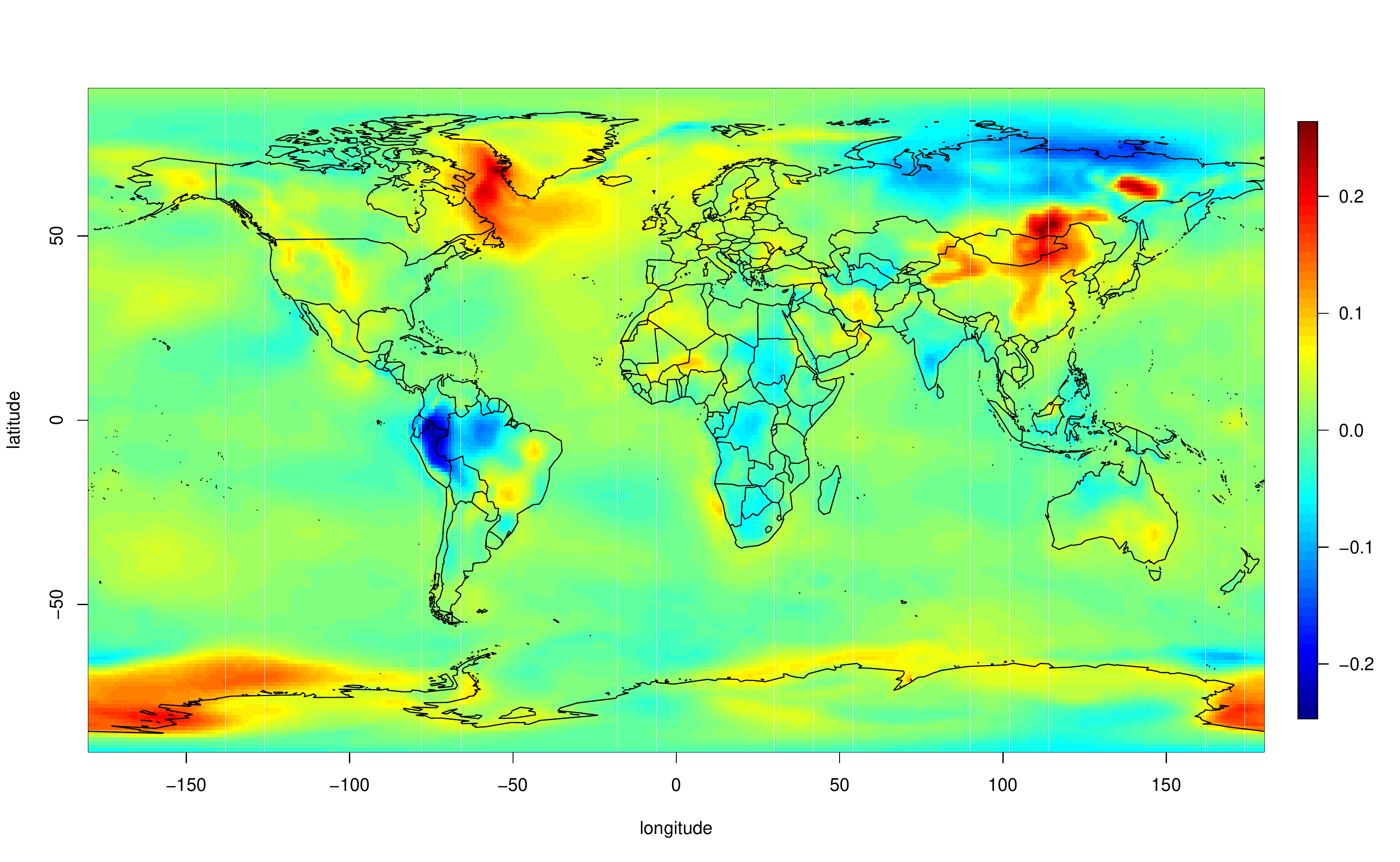}
	\includegraphics[width=0.7\textwidth]{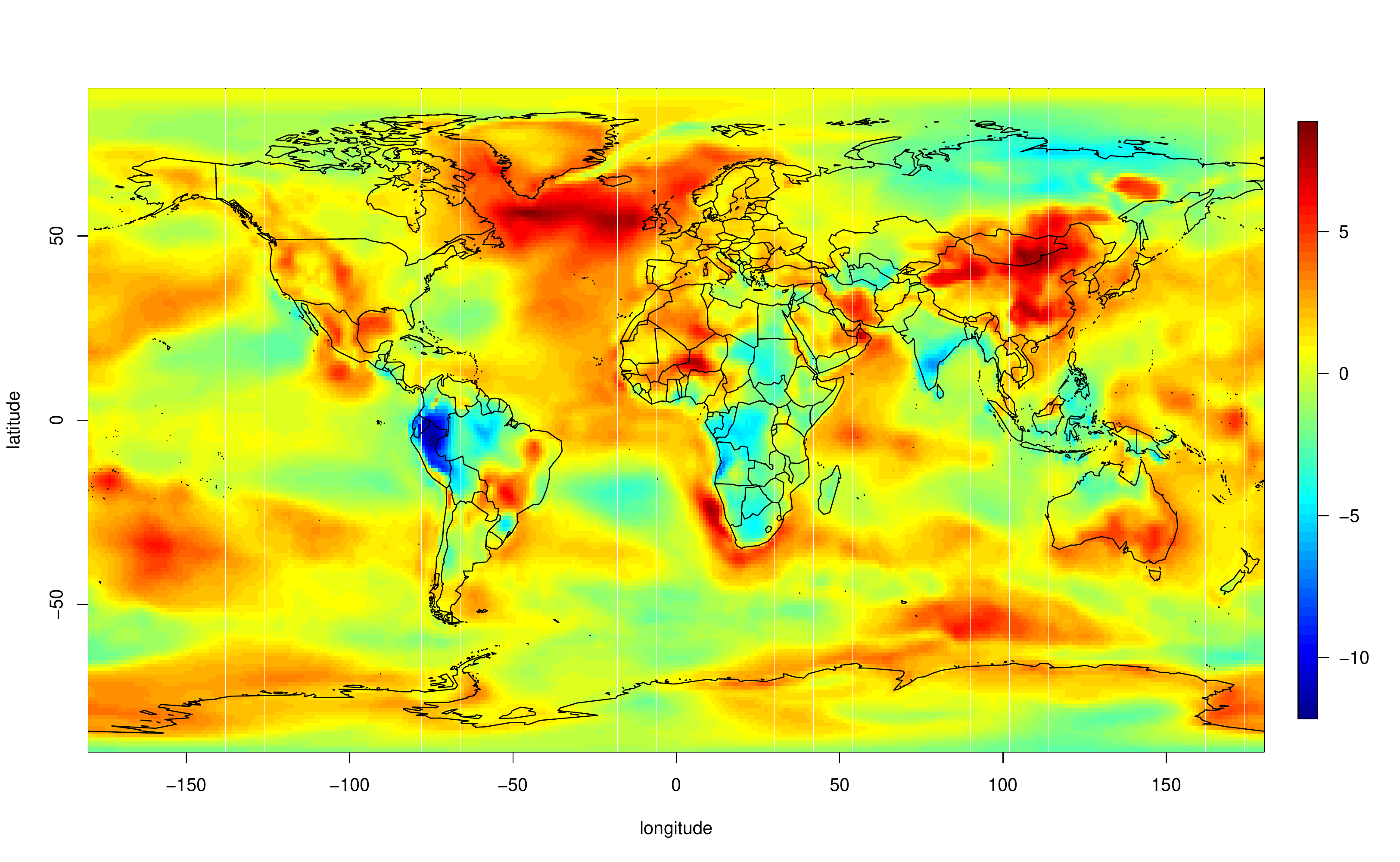}
	\includegraphics[width=0.7\textwidth]{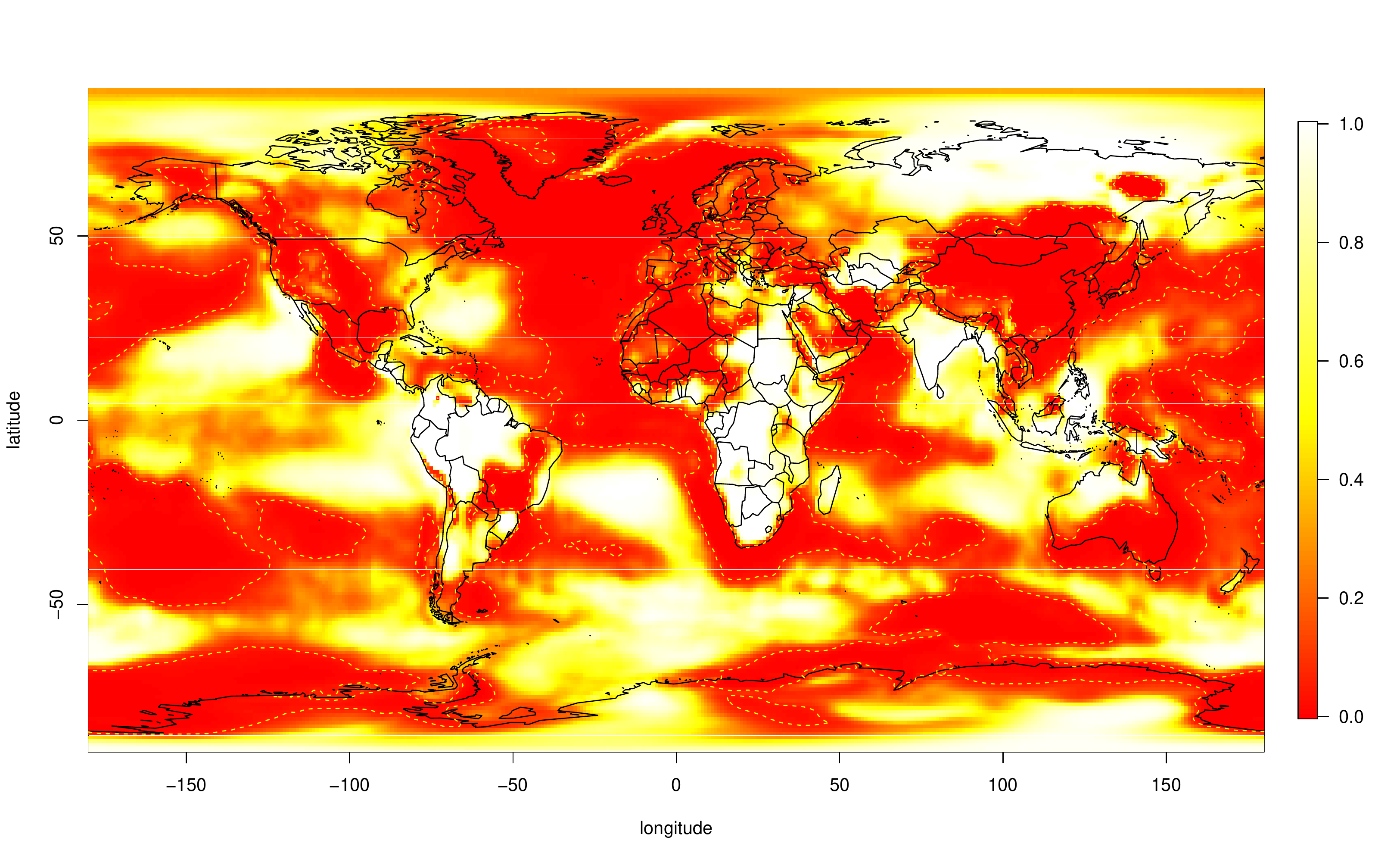}
	\caption{Above: Average yearly temperature change in centigrades, 1983-2007. Middle: Pointwise values of test statistic ({t-distribution}, 24 degress of freedom). Below: Unadjusted p-values with marking of the 5\% threshold. \label{fig-temp}
	}
\end{figure}

%\paragraph{FDR}
To perform the BH procedure, we mapped the sphere into $ \mathbf{T} = (-\pi, \pi) \times (-\pi/2, \pi/2)$ by (scaled) polar coordinates, ie. longitude and latitude. This mapping gives rise to a measure $\nu = f \cdot \mu$ on $\mathbf{T}$ where $f$ is proportional to $\cos($latitude) cf. Remark \ref{mangfoldighedsremark}. This measure gives uniform weights to all points on Earth, assuming Earth to be a perfect sphere.
One-sided t-tests were used for obtaining unadjusted p-values. We used the same grid as the observations for approximating the BH procedure, in total $180\times360=64800$ grid points.

\subsection{Results}
\begin{figure}[!htbp]
	\centering
	\includegraphics[width=0.7\textwidth]{graphs/unadjusted_pvals.pdf}
	\includegraphics[width=0.7\textwidth]{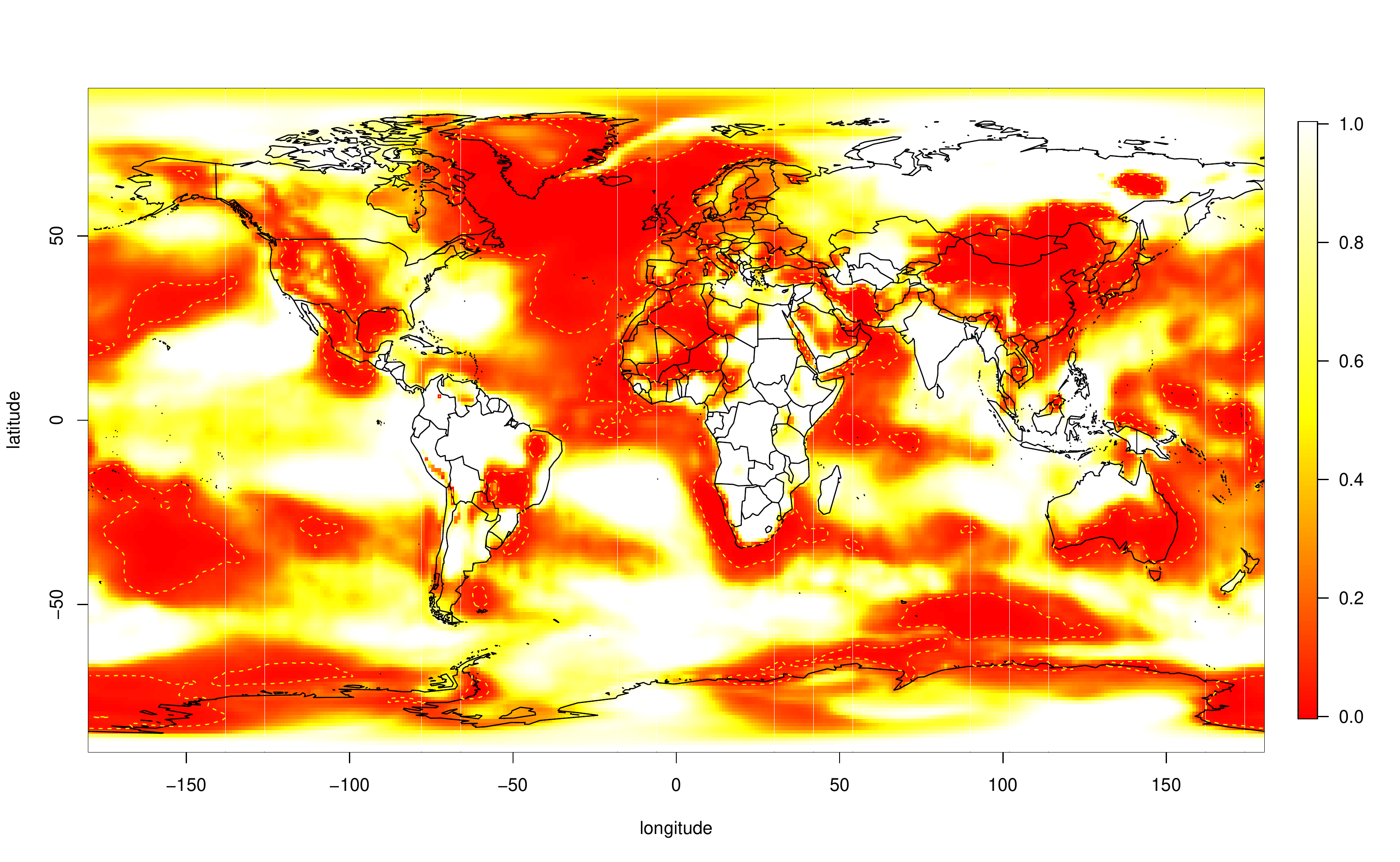}
	\caption{Upper plot: Unadjusted p-values. 
	Lower plot: FDR-adjusted p-values. Dashed lines indicate 5\% significance levels.} \label{fig-iwt}
\end{figure}
The coverage areas at various significance levels are provided in Table \ref{tabel-sig}. 

Although more conservative by construction than unadjusted p-values, the fBH-adjusted p-values still retained large significant areas,  indicating that the temperature increase observed in these areas is very unlikely to be a coincidence in but a small fraction of these areas.
If we take look at the map, the North Atlantic Ocean and northern China stands out; it is evident that these regions have experienced  temperatures far above the normal in the latest years with the adverse weather effects this may cause.

%Updated numbers:
\begin{table} {\centering
		\begin{tabular}{c|ccc} % Numbers recalculated on 8-6-18
			Significance level & Unadjusted p value& fBH-adjusted p value\\ \hline 
			0.10  & 0.410 & 0.229 \\
			0.05  & 0.324 & 0.150 \\
			0.01 & 0.178  & 0.062 \\
			0.001 & 0.074 & 0.023
		\end{tabular} \caption{Areas of significance of the correction methods at various significance levels as percentage of Earth total} \label{tabel-sig} }
\end{table}

\section{Discussion} \label{diskussions-afsnit}

In this paper we  defined the false discovery rate for functional data defined on generic subdomains of $\R^k$, and by remark \ref{mangfoldighedsremark}, this is easily extended to non-euclidean domains such as the sphere used in the data application. Furthermore, we devised a correction method for controlling FDR, the \emph{functional BH procedure}, along with an adjusted p-value function. We showed a finite dimensional approximation of the fBH procedure, and proved that it converges to its theoretical infinite-dimensional definition.
The fBH procedure was successfully applied in two simulations and on a data set on climate change. 

The two simulation studies allowed us to study the fBH procedure in a setting where the distributions and true null hypotheses were known. The false discovery rate was controlled by FBH procedure in all instances, unlike the unadjusted p-value functions. The signal-to-noise ratio proved to be important: the sensitivity of the fBH method increased with signal strength, but the false discovery rate was remarkably constant.
%This is a desirable property, and we believe it holds true  generally. 

By means of the climate change analysis, we demonstrated the applicability of the method and also gained insight into which regions of Earth that have seen temperature increases due to global warming in a recent time span. More advanced models and tests may further increase our understanding of local temperature changes in connection to warming, but we leave this as future work.

The fBH procedure is easily applicable in practice thanks to its finite dimensional approximation. However, it should be noted that as generally is the case in functional data analysis, the performances of the procedure depends on how well the discrete grid approximates the functional data on one hand, and how well the functions are smoothed from data on the other hand.

We would like to stress the minimal assumptions required of the fBH approach: the dependence structure of functional data such as the Earth climate data can be complex and difficult to model. However, our approach  does not require specific modelling of the covariance structure of the data; we merely require a certain degree of positive association among $p$-values. 

Functional FDR and the fBH procedure suffer from some of the same issues as their corresponding discrete versions, see. e.g. \cite{storey2003pfdr}. As noted in Section \ref{lit-review}, many authors have proposed methods or quantities to deal with multiple testing. Given the success of formulating FDR and the BH procedure in a functional framework, it is likely that some of these other methods or quantities can be expanded to the functional case as well.

Due to its simple applicability, general setting and easy understanding, we expect fFDR to have a great potential as a tool for local inference in functional data analysis. The functional BH procedure require only little computational power, and should at most be a minor issue in applications.

%appendix
%\newpage

%%%%%%

\clearpage
 \appendix
{ \noindent \huge \textbf{Appendix}}

\section{Proofs}

\textbf{Proof of Proposition \ref{hoved-thm} and Proposition \ref{general-thm}}

\subsection{Assumptions}
We begin by repeating the assumptions of Proposition \ref{hoved-thm} and Proposition \ref{general-thm}.

\begin{defi*}[PDRS]
Let '$\leq$' be the partial/usual ordering on $\R^l$.
	An \emph{increasing set} $D \subseteq \R^l$ is a set satisfying $x \in D \wedge y \geq x \Rightarrow y \in D$.
	
	A random variable $\mathbf{X}$ on $\R^l$ is said to be \emph{PDRS on  $I_0$},  where $I_0$ is a subset of $\{1, \dots, l\}$, if it for any increasing set $D$ and $i \in I_0$ holds that 
\begin{equation*} %\label{eq-pdrs-def-repeated}
	x \leq y \Rightarrow P(\mathbf{X} \in D | X_i = x) \leq P(\mathbf{X} \in D | X_i = y)
\end{equation*}

Let  $\mathbf{Z}$ be an  infinite-dimensional random variable, where instances of $\mathbf{Z}$ are functions $T \pil \R$. We say that $\mathbf{Z}$ is PDRS on $U \subseteq T$ if all finite-dimensional distributions of $\mathbf{Z}$ are PDRS. That is, for all finite subsets $I = \{i_1, \dots , i_l \} \subseteq T$, it holds that
 $Z(i_1), \dots, Z(i_l)$ is PDRS on $I \cap U$.
\end{defi*}
Let $\{S_k\}_{k=1}^\infty, S_1 \subset S_2 \subset \dots$ be a dense, uniform grid in $\bt$ in sense that $S_k$ uniformly approximates all level sets of $p$ and $p|_U$ with probability one.

For Theorem \ref{hoved-thm} that amounts to,
\begin{equation}
P \left[ \lim_{k \pil \infty} \sup_r \frac{\#( S_k \cap \{s: p(s) \leq r \})} {\# S_k} -  \mu\{s: p(s) \leq r \} \pil 0 \right] = 1 \label{th12-approx1}
\end{equation} and
\begin{equation}
P \left[ \lim_{k \pil \infty} \sup_r \frac{\#( S_k \cap \{s: p(s) \leq r \} \cap U)} {\# S_k} -  \mu(\{s: p(s) \leq r \} \cap U) \pil 0 \right] = 1 \label{th12-approx2}
\end{equation}
whereas for  Theorem \ref{general-thm}, we need the density function $f$:
\begin{equation*}
P \left[ \lim_{k \pil \infty} \sup_r \frac{\sum_{i \in S_k \cap \{s: p(s) \leq r \}} f(i) } {\# S_k} -  \int_{\{s: p(s) \leq t \}} f(x) \de x  \pil 0 \right] = 1
\end{equation*} and
\begin{equation*}
P \left[ \lim_{k \pil \infty} \sup_r \frac{\sum_{i \in S_k \cap \{s: p(s) \leq r \} \cap U} f(i) } {\# S_k} -  \int_{\{s: p(s) \leq t \} \cap U} f(x) \de x  \pil 0 \right] = 1
\end{equation*}

Furthermore, assume that $p$ is PDRS wrt. the set of true null hypotheses with probability one, and that the assumptions about $p$-value function below hold true with probability one:
\begin{enumerate}
	\item[(a1)] All level sets of $p$ have zero measure, 
	\[ \mu\{s: p(s) = t \} = 0 \quad \forall t \in \bt
	\]
	\item[(a2)] $\alpha^* \in (0,\alpha] \Rightarrow$: for any open neighbourhood $O$ around $\alpha^*$ there exists $s_1, s_2 \in O$ s.t. $ a(s_1) > \alpha^{-1} s_1, a(s_2) < \alpha^{-1} s_2$, where $a$ is the cumulated p-value function (Definition \ref{bh-def-unweighted}).
	\item[(a3)]  $[\alpha^* = 0] \Rightarrow \min p(t) > 0$. 
\end{enumerate}

\subsection{Proof details}
For the ease of presentation, we will only consider Theorem \ref{hoved-thm} and furthermore assume that $\mu(\bt) = 1$; the latter can be done without loss of generalisation. The proof of Proposition \ref{general-thm} is analogous but notionally tedious, as the counts are replaced by sums and the measures by integrals. 

Let $a_k$ be the cumulated p-value function for the $k$'th iteration of the BH procedure: %, ie. 
\begin{equation*}
a_k(t) := N_k \# \{ s \in S_k : p(s) \leq t\}
% N_k \sum_{i \in S_k: p(i) \leq t} f(i)
\end{equation*}
and define the k'th step false discovery proportion $Q_k$  by applying the (usual) BH procedure at level $q$ to $p$ evaluated in $S_k$:
\begin{equation*}
    Q_k = \frac{\#\{t \in S_k : p(t) \leq b_k\} \cap U}{\#\{t \in S_k : p(t) \leq b_k\}}, \quad 
    b_k = \arg\max_r \frac{\# \{s \in S_k : p(s) \leq r \}}{\#S_k} \geq \alpha^{-1} r 
\end{equation*}
or equivalently $b_k = \arg\max_t a_k(t) \geq \alpha^{-1} r$.

\begin{lemma} \label{lemma-0v2}
	$a_k$ converges to $a$ uniformly as $k \pil \infty$. 
\end{lemma}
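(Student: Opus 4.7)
The plan is to observe that the lemma is essentially a direct translation of the grid-approximation assumption \eqref{assump1-leb} (rewritten as \eqref{th12-approx1} in the appendix's list of hypotheses). Setting $N_k := 1/\#S_k$ in the definition of $a_k$, the $k$-th step cumulated $p$-value function is
\[
a_k(r) \;=\; \frac{\#\{s \in S_k : p(s) \leq r\}}{\#S_k},
\]
while the cumulated $p$-value function of \Cref{bh-def-unweighted} is $a(r) = \nu\{s : p(s) \leq r\}$. In the unweighted setting of \Cref{hoved-thm} we have $\nu = \mu$, hence $a(r) = \mu\{s : p(s) \leq r\}$. With these identifications in place, the claim $\sup_r |a_k(r) - a(r)| \to 0$ almost surely is literally the content of \eqref{assump1-leb}, so the proof reduces to pointing at the hypothesis.

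For the weighted counterpart in \Cref{general-thm}, the same bookkeeping applies with \eqref{assump1-gen} replacing \eqref{assump1-leb}: counts on $S_k$ are replaced by $f$-weighted sums, and $\mu\{p \leq r\}$ is replaced by $\int_{\{p \leq r\}} f \, d\mu = \nu\{p \leq r\}$. No estimates of independent interest are required.

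The only matter that requires even modest care is the interpretation of the supremum inside the probability in \eqref{assump1-leb}; once one reads it as the uniform norm $\sup_r |a_k(r) - a(r)|$, the conclusion is immediate. If one wished to start instead from a purely pointwise version of the assumption, the upgrade to uniform convergence would follow from Pólya's theorem: $a_k$ is monotone nondecreasing, $a$ is continuous and monotone (continuity being guaranteed by assumption (a1), which forces all level sets of $p$ to have zero measure), and pointwise convergence of monotone functions to a continuous limit on a bounded range is automatically uniform. Thus the main "obstacle" is purely notational -- verifying that the finite-approximation object $a_k$ and the infinite-dimensional object $a$ are the two quantities compared in the approximation hypothesis -- rather than genuinely analytic.
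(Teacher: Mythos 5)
Your proposal is correct and takes essentially the same route as the paper, whose proof of Lemma \ref{lemma-0v2} is the one-line observation that the claim follows from assumption \eqref{th12-approx1} together with the definitions of $a_k$ and $a$ (with $N_k = 1/\#S_k$ and $\mu(\bt)=1$ normalised away). Your additional remark about upgrading pointwise to uniform convergence via monotonicity and the continuity of $a$ is not needed here but is a sound observation.
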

Proof: Follows from assumption \eqref{th12-approx1} and definitions of $a_k$ and $a$.

\begin{lemma} \label{lemma-et} $b_k$ converges to $\alpha^*$ as $k \pil \infty$
\end{lemma}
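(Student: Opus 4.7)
The plan is to use the uniform convergence $a_k \to a$ from Lemma \ref{lemma-0v2} and translate it into convergence of the maximiser $b_k \to \alpha^*$. I would first observe that since $a \leq 1$ while $\alpha^{-1} r > 1$ for $r > \alpha$, both $\alpha^*$ and every $b_k$ necessarily lie in $[0,\alpha]$, so it suffices to handle the two cases $\alpha^* \in (0,\alpha]$ and $\alpha^* = 0$ separately (exactly matching the hypotheses (a2) and (a3)).

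For the main case $\alpha^* \in (0,\alpha]$, fix $\varepsilon > 0$ and establish $\alpha^* - \varepsilon < b_k < \alpha^* + \varepsilon$ for all sufficiently large $k$. For the lower bound I would apply (a2) to the neighbourhood $(\alpha^* - \varepsilon, \alpha^* + \varepsilon)$ to produce $s_1$ with $a(s_1) - \alpha^{-1} s_1 =: 2\eta > 0$; uniform convergence then gives $a_k(s_1) > \alpha^{-1} s_1 + \eta$ for large $k$, whence $b_k \geq s_1 > \alpha^* - \varepsilon$ by the definition of $b_k$. For the upper bound I would first note that (a1) makes $a$ continuous (its jumps would correspond to point masses in the level sets of $p$), and that by the sup-definition of $\alpha^*$ we have $a(r) - \alpha^{-1} r < 0$ strictly for every $r > \alpha^*$; continuity on the compact interval $[\alpha^* + \varepsilon, \alpha]$ then produces a uniform gap $a(r) - \alpha^{-1} r \leq -\delta$, which uniform convergence of $a_k$ closes to $a_k(r) - \alpha^{-1} r \leq -\delta/2 < 0$ throughout $[\alpha^* + \varepsilon, \alpha]$ for large $k$. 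Hence no such $r$ satisfies the defining inequality, and $b_k < \alpha^* + \varepsilon$.

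The boundary case $\alpha^* = 0$ is handled using (a3): since $m := \min_t p(t) > 0$, the function $a$ vanishes on $[0,m)$, and by definition of $\alpha^*$ we have $a(r) < \alpha^{-1} r$ strictly for every $r \in (0,\alpha]$. A uniform-gap argument analogous to the one above — continuity of $a$ on the compact $[\varepsilon, \alpha]$ plus Lemma \ref{lemma-0v2} — yields $b_k < \varepsilon$ for every $\varepsilon > 0$, while $b_k \geq 0 = \alpha^*$ is trivial, giving $b_k \to 0$.

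The main obstacle is exactly what assumption (a2) is designed to rule out: if $a$ were merely tangent to the line $r \mapsto \alpha^{-1} r$ at $\alpha^*$, touching from below without actually crossing, then arbitrarily small uniform perturbations $a_k$ could either lift the whole line above $a_k$ near $\alpha^*$ (sending $b_k$ far to the left) or create new crossings far to the right. The transversal-crossing condition (a2) turns $\alpha^*$ into a \emph{stable} zero of $a(r) - \alpha^{-1} r$ under $\|\cdot\|_\infty$-perturbations, which is the one ingredient that makes uniform convergence of $a_k$ imply convergence of the maximisers $b_k$.
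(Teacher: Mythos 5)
Your argument is correct and follows essentially the same route as the paper's proof: a case split on $\alpha^*=0$ versus $\alpha^*\in(0,\alpha]$, a compactness/uniform-gap argument combined with uniform convergence of $a_k$ to get $\limsup b_k \leq \alpha^*$, and assumption (a2) to get $\liminf b_k \geq \alpha^*$. If anything, you are more explicit than the paper on two points it leaves implicit --- the derivation of continuity of $a$ from (a1) and the $\liminf$ direction, which the paper dismisses with ``conversely, we can show'' --- so no changes are needed.
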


\begin{proof}
	By Lemma \ref{lemma-0v2}, $a_k$ converges uniformly to $a$. There are two cases: $\alpha^* = 0$  and $\alpha^* \in (0, \alpha]$. %In the first case, ...
	
	\paragraph{Case 1, $\alpha^* = 0$} 
	
	Let $O$ be any open neighbourhood around zero. $O^{C}$ (where the complement is wrt. [0,1]) is a closed set that satisfies $a(t) < \alpha^{-1}t $. By continuity of $a$, there exists an $\epsilon > 0$ s.t. $a(t) < \alpha^{-1}t - \epsilon$ for all $t \in O^C$. As $a_k$ converges uniformly to $a$, eventually for large enough $k$, $a_k(t) < \alpha^{-1}t$ for $t \in T \backslash O$, and thus $b_k \in O$ eventually. This was true for any $O$, and we conclude that $b_k \pil 0$. 
	
	\paragraph{Case 2, $\alpha^* \in (0, \alpha]$}
	
	By assumption, for any open neighbourhood $O \ni \alpha^*$, there exist $s_1, s_2 \in O$ s.t. $a(s_1) > \alpha^{-1}s_1$, $a(s_2) < \alpha^{-1}s_2$. 
	
	For $t > \alpha^*, t \notin O$, we have that $\alpha^{-1} t - a(t) > \epsilon $ for some $\epsilon > 0$ by continuity of $a$. Hence by uniform convergence, it must hold that for $k$ sufficiently large we have $a_k(t) < \alpha^{-1}t$ for $t > \alpha^*, t \notin O$. This was true for any $O$, and we conclude $\limsup b_k \leq \alpha^*$. 
	
	Conversely, we can show that $\liminf b_k \geq \alpha^*$, and thus $\lim b_k = \alpha^*$.

\end{proof}

Define $A_k = \#\{t \in S_k : p(t) \leq b_k \}$, and define $Q_k$ as the false discovery proportion for the $k$'th iteration:
\begin{equation*}
Q_k := \frac{\# (A_k \cap U)}{\# A_k}1_{A_k \neq \emptyset}
\end{equation*}

\paragraph{Rejection areas} Now we intend to prove 	that $H_{t,k}$ converges eventually.  Note that $p(t)$ is independent of $k$, and that $H_{t,k} = (t \in S_k) \cap (p(t) \leq b_k)$, i.e. the event that the BH threshold at step $k$ is larger than $p(t)$. 

\begin{prop} \label{htk-konv}
	For all $t$ that satisfies $p(t) \neq \alpha^*$, $H_{t,k}$ converges eventually. 
\end{prop}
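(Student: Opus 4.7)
The plan is to combine the convergence $b_k \to \alpha^*$ established in Lemma \ref{lemma-et} with the fact that the grids $\{S_k\}$ are nested and increasing, then do a simple case split according to whether $p(t)$ lies strictly below or strictly above $\alpha^*$.

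First, since $t \in \bigcup_{m=1}^\infty S_m$ and $S_1 \subset S_2 \subset \cdots$, there exists $m_0$ such that $t \in S_k$ for every $k \geq m_0$. So the clause $(t \in S_k)$ in the definition of $H_{t,k}$ is eventually true, and the behaviour of $H_{t,k}$ for large $k$ is governed entirely by whether $p(t) \leq b_k$.

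Next, split on the sign of $p(t) - \alpha^*$. If $p(t) < \alpha^*$, set $\epsilon = (\alpha^* - p(t))/2 > 0$; by Lemma \ref{lemma-et}, eventually $|b_k - \alpha^*| < \epsilon$, which forces $b_k > p(t)$, so $H_{t,k}$ is eventually true (rejection). Symmetrically, if $p(t) > \alpha^*$, set $\epsilon = (p(t) - \alpha^*)/2 > 0$; then eventually $b_k < p(t)$, and $H_{t,k}$ is eventually false (acceptance). The assumption $p(t) \neq \alpha^*$ is precisely what lets us extract a positive margin $\epsilon$ in both directions, which is the only reason the case is excluded from the statement.

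There is really no substantial obstacle here: the entire argument is a two-line consequence of Lemma \ref{lemma-et} together with the monotonicity of the grids, and the $p(t) = \alpha^*$ case is genuinely borderline and hence correctly omitted. The only point worth stating carefully in the write-up is the distinction between the two sub-cases of Lemma \ref{lemma-et} ($\alpha^* = 0$ versus $\alpha^* \in (0, \alpha]$), but since Lemma \ref{lemma-et} already delivers $b_k \to \alpha^*$ uniformly in both regimes, no further work is required in either case.
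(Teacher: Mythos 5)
Your argument is correct and is essentially the paper's own proof: both reduce the claim to the convergence $b_k \to \alpha^*$ from Lemma \ref{lemma-et} plus the nestedness of the grids, and conclude that for $p(t) \neq \alpha^*$ the comparison $p(t) \leq b_k$ stabilises. The only (immaterial) difference is that the paper also records the trivial case $t \notin S_k$ for all $k$, whereas you restrict attention to $t \in \bigcup_m S_m$ from the outset.
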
 
%Note: From the fact that $\tilde{p}(\alpha^*)$ exists, it follows that $p(t) = \alpha^*$ on a set of measure zero. 

% Note: Max endeligt mange som ikke opfylder ...
\begin{proof}
	First note that if $t \notin S_k$ for all $k$, then $H_{t,k}$ is trivially zero for all $k$. 
	So assume $t \in S_{k_0}$ for some $k_0$. As $k \pil \infty$, $b_k \pil \alpha^*$, and by assumption $p(t) \neq \alpha^*$. Eventually, as $k \pil \infty$, $p(t)$ is either strictly larger or strictly smaller than $b_k$, proving the result.  
\end{proof}

\paragraph{Convergence of $Q_k$}
Finally we need to show that $Q_k \pil Q$. We show this by proving convergence of the nominator and denominator, and arguing that $Q = 0$ implies that $Q_k$ is 0 eventually.

Define $H^0 = \{t | p(t) > \alpha^*\}$, ie. the acceptance region, and $H^1 = T \backslash H^0$, the rejection region. Note that $\mu(H^1) = a( \alpha^*) = \alpha^{-1} \alpha^* $.
%Note that $\alpha^* = 0$ iff $|H^1| = 0$.
%
Also note that $H^1 = V \cup S = \{t: p(t) \leq \alpha^*\}$ and $H^1 \cap U = V$. 

%Define $A_k = S_k \cap 

\begin{prop} \label{prop-qk}
	$ N_k \#  A_k   \pil \mu(H^1)$ and 
	$N_k \# (A_k \cap U) \pil \mu(H^1 \cap U)$.
\end{prop}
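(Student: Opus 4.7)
The plan is to recognize that both quantities in the proposition are just evaluations of approximating cumulated $p$-value functions at the thresholds $b_k$, and then combine uniform convergence of these approximations with continuity of the limit and convergence $b_k \to \alpha^*$ already established in Lemma \ref{lemma-et}.

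First I would write $N_k \# A_k = a_k(b_k)$ directly from the definition of $a_k$ in the proof and the definition of $A_k$. By Lemma \ref{lemma-0v2}, $a_k \to a$ uniformly on $[0,1]$. Next I would argue that $a$ is continuous: assumption (a1) states $\mu\{s : p(s) = t\} = 0$ for every $t$, and since $a(t) = \mu\{s : p(s) \leq t\}$, the jump of $a$ at any $t$ equals $\mu\{s : p(s) = t\} = 0$, so $a$ is continuous on $[0,1]$. With $b_k \to \alpha^*$ (Lemma \ref{lemma-et}), a standard three-epsilon estimate
\[
|a_k(b_k) - a(\alpha^*)| \leq \|a_k - a\|_\infty + |a(b_k) - a(\alpha^*)|
\]
gives $a_k(b_k) \to a(\alpha^*) = \mu\{s : p(s) \leq \alpha^*\} = \mu(H^1)$, which is the first claim.

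For the second claim I would repeat the argument with the analogous cumulated $p$-value function restricted to $U$, namely $a_k^U(r) := N_k \#(S_k \cap \{s : p(s) \leq r\} \cap U)$ and its limit $a^U(r) := \mu(\{s : p(s) \leq r\} \cap U)$. Assumption \eqref{th12-approx2} gives uniform convergence $a_k^U \to a^U$ on $[0,1]$, and continuity of $a^U$ again follows from (a1), since $\mu(\{s : p(s) = t\} \cap U) \leq \mu\{s : p(s) = t\} = 0$. Evaluating at $b_k \to \alpha^*$ then yields $a_k^U(b_k) \to a^U(\alpha^*) = \mu(H^1 \cap U)$, and noting that $N_k \#(A_k \cap U) = a_k^U(b_k)$ completes the argument.

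The only mildly delicate point is justifying continuity of $a$ and $a^U$ on the whole interval from (a1); everything else is a routine three-epsilon estimate built on Lemmas \ref{lemma-0v2} and \ref{lemma-et}. There is no real obstacle beyond checking that the particular level $r = \alpha^*$ is not exceptional, which is exactly what (a1) guarantees. This proposition then feeds directly into the final convergence argument for $Q_k \to Q$, where one additionally uses assumption (a3) to handle the degenerate case $\alpha^* = 0$ (in which $H^1 = \emptyset$ eventually and both $A_k$ and $A_k \cap U$ are empty for large $k$).
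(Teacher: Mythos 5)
Your proof is correct and follows essentially the same route as the paper: the identity $N_k\# A_k = a_k(b_k)$ together with uniform convergence of $a_k$ (Lemma \ref{lemma-0v2}), $b_k \to \alpha^*$ (Lemma \ref{lemma-et}), and continuity of $a$ is exactly the paper's argument, where the step $\mu(J_k \triangle H^1)\to 0$ is just your $|a(b_k)-a(\alpha^*)|\to 0$ phrased in terms of nested sublevel sets. Your explicit derivation of continuity of $a$ and $a^U$ from (a1) is a detail the paper leaves implicit, and it is a worthwhile addition.
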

\begin{proof}

For $k$, define $J_k = \{t : p(t) \leq b_k \}$. Note that $A_k = J_k \cap S_k$.
%
%First observe that $\frac{N_k}{2^{-nk}} \pil 1$ for $k \pil \infty$ (previously argued).
Observe that by the assumption about uniform convergence on levels sets (Eq. \eqref{th12-approx1}):
\[
N_k \# (J_k \cap S_k) - \mu(J_k) \pil 0 \for k \pil \infty.
\]
%2^{-nk}  \mu \{i \in A_k\}  - \int_{J_k} f(x) \de x \pil 0$ for $k \pil \infty$.

Next observe that due to (1) continuity of $a$, (2) $b_k \pil \alpha^*$  and (3) the fact that we are considering sets on the form $\{t: p(t) \leq x \}$, we are able to conclude that
\begin{equation*}
\mu(J_k \triangle H^1)  \pil 0 \for k \pil \infty.
%\mu(J_k) = a(b_k) \pil a(\alpha^*) \for k \pil \infty.
\end{equation*}
and we conclude $N_k \#  A_k \pil \mu(H^1)$.

For the second part, observe that from Eq. \eqref{th12-approx2}) it follows that
\[
N_k \# (U \cap J_k \cap S_k) - \mu(U \cap J_k) \pil 0 \for k \pil \infty.
\]
We just argued that $\mu(J_k \triangle H^1) \pil 0$. It remains true when "conditioning"\ on a measurable set, in this case $U$: 
\begin{equation*}
\mu((J_k \cap U) \triangle (H^1 \cap U))  \pil 0 \for k \pil \infty.
\end{equation*}
and we conclude $N_k \#  (A_k \cap U) \pil \mu(H^1)$.
\end{proof}

For $\alpha^* = 0$ we have the following stronger result:
\begin{lemma}\label{a0-lemma}
	If $\alpha^* = 0$, then $\# A_k = 0$ eventually. 
\end{lemma}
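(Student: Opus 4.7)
The plan is to combine assumption (a3) with the already-established convergence $b_k \to \alpha^*$ from Lemma \ref{lemma-et}. If $\alpha^* = 0$, then (a3) guarantees that $\min_{t \in \bt} p(t) > 0$, so there exists some $\delta > 0$ with $p(t) \geq \delta$ for every $t \in \bt$, and in particular for every $t \in S_k$ (since $S_k \subseteq \bt$).

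Next, Lemma \ref{lemma-et} gives $b_k \to 0$ as $k \to \infty$. Hence for all sufficiently large $k$ we have $b_k < \delta$. But then the defining condition $p(t) \leq b_k$ for $t \in S_k$ is incompatible with $p(t) \geq \delta$, so the set $A_k = \{t \in S_k : p(t) \leq b_k\}$ is empty, i.e.\ $\# A_k = 0$ eventually. This is precisely the claim.

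The argument is short and essentially a direct consequence of (a3) together with the previous lemma; there is no genuine obstacle. The only mild point to check is that the $\min$ in assumption (a3) really is a strictly positive lower bound valid over $\bt$ (or at least over $\cup_k S_k$), as opposed to a pointwise statement; the stated formulation makes this clear. Once that is noted, the conclusion is immediate.
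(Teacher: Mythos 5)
Your proof is correct and follows essentially the same route as the paper: both arguments reduce the claim to the fact that $b_k$ eventually falls below $\min p(t) > 0$ (guaranteed by (a3)), at which point $A_k = \{t \in S_k : p(t) \leq b_k\}$ must be empty. The only difference is that you obtain the smallness of $b_k$ by citing Lemma \ref{lemma-et} directly, whereas the paper re-runs the uniform-convergence argument on $[\min p(t), 1]$ to get the slightly stronger conclusion $b_k = 0$ eventually; this stronger conclusion is not needed for the lemma, so your shortcut is legitimate.
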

From this lemma it follows that $ N_k \#  A_k = 0$ (and thus $Q_k$ as well) eventually.
\begin{proof}
	Since $\alpha^* = 0$, $a(t) < \alpha^{-1}t$ for all $t> 0$. By assumption, $\min p(t) > 0$, and thus $a_k(s) = 0$ for $s < \min p(t)$ and all $k$. 
	
	By continuity of $a$, it follows that there exists $\epsilon > 0$ s.t. $\alpha^{-1} t - a(t) > \epsilon$ on the interval $[\min p(t) , 1]$, and by uniform convergence of $a_k$ we get that for large enough $k$, $a_k(t) < \alpha^{-1} t$ for all $t \geq \min p(t)$. 
	
	Combining with 	$a_k(t) = 0$ for $t < \min p(t)$, we get that eventually $a_k(t) < \alpha^{-1}t$ for every $t > 0$ and thus $b_k = 0$. From this (remember $\min p(t)>0$) we conclude that all hypotheses are rejected eventually, ie.  $\# A_k = 0$ for $k$ sufficiently large. 
\end{proof}

\begin{thm}  \label{qk-konv-1}
	$Q_k$ converges to $Q$ almost surely, and $\limsup_{k \pil \infty} \E[Q_k] \leq \alpha \mu(U)$.
\end{thm}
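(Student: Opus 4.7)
The plan is to assemble the ingredients already proved: Lemma \ref{lemma-et} ($b_k \pil \alpha^*$), Proposition \ref{prop-qk} (rescaled convergence of $\# A_k$ and $\#(A_k \cap U)$), Lemma \ref{a0-lemma} (handling the edge case $\alpha^* = 0$), and the finite-dimensional Benjamini--Yekutieli bound (Theorem \ref{pdrs-thm}). I would prove the almost-sure convergence $Q_k \pil Q$ by a case split on the value of $\alpha^*$, and then derive the expectation bound by applying Theorem \ref{pdrs-thm} at each step $k$ and passing to the limit.

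For the almost-sure convergence I work on the probability-one event where the PDRS hypothesis, assumptions (a1)--(a3), and the uniform-approximation conditions \eqref{th12-approx1}--\eqref{th12-approx2} all hold. If $\alpha^* \in (0, \alpha]$, then $\mu(H^1) = \alpha^{-1}\alpha^* > 0$ and Proposition \ref{prop-qk} gives both $N_k \# A_k \pil \mu(H^1)$ and $N_k \#(A_k \cap U) \pil \mu(H^1 \cap U)$; writing $Q_k = [N_k \#(A_k \cap U)] / [N_k \# A_k]$ and using that the denominator has a strictly positive limit, the ratio converges to $\mu(H^1 \cap U)/\mu(H^1) = \mu(V)/\mu(V \cup S) = Q$ (using $H^1 = V \cup S$ and $H^1 \cap U = V$). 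If instead $\alpha^* = 0$, Lemma \ref{a0-lemma} shows $\# A_k = 0$ eventually, so $Q_k = 0$ for all large $k$ by the $1_{A_k \neq \emptyset}$ convention; simultaneously $\mu(H^1) = 0$ forces $\nu(V \cup S) = 0$, and hence $Q = 0$ by the convention in \eqref{funk-fdr-def}.

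For the expectation bound, observe that for each fixed $k$, $Q_k$ is exactly the false discovery proportion produced by applying the ordinary multivariate BH procedure at level $\alpha$ to the finite family $\{p(t) : t \in S_k\}$. By the infinite-dimensional extension in Definition \ref{pdrs-def}, the restriction of $(p(t))_{t \in \bt}$ to $S_k$ is PDRS on $S_k \cap U$, so Theorem \ref{pdrs-thm} yields the (deterministic) inequality $\E[Q_k] \leq \alpha \cdot \#(S_k \cap U)/\#S_k$ for every $k$. Specialising assumption \eqref{th12-approx2} to $r = 1$ (where $\{s : p(s) \leq 1\} = \bt$) gives $\#(S_k \cap U)/\#S_k \pil \mu(U)$, and taking $\limsup$ in $k$ yields $\limsup_{k \pil \infty} \E[Q_k] \leq \alpha \mu(U)$.

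The main subtlety I anticipate is the degenerate boundary case $\alpha^* = 0$: both $Q_k$ and $Q$ are defined through indicator conventions, so a naive ``limit of ratios'' argument would have to contend with a $0/0$ form. Lemma \ref{a0-lemma} circumvents this by showing that the denominator of $Q_k$ is exactly zero for all sufficiently large $k$, which matches the convention used in the definition of $Q$ and allows both quantities to be set to zero consistently.
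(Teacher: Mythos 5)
Your proposal is correct and follows essentially the same route as the paper's proof: a case split on $\alpha^*$ using Proposition \ref{prop-qk} for $\alpha^* > 0$ and Lemma \ref{a0-lemma} for $\alpha^* = 0$, followed by an application of Theorem \ref{pdrs-thm} at each step $k$ and the $r=1$ specialisation of \eqref{th12-approx2} to pass to the limit. Your explicit handling of the $0/0$ convention in the degenerate case is a welcome clarification of a point the paper treats more tersely.
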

\begin{proof}
	By Lemma \ref{a0-lemma}, $Q_k$ converges to $Q$ when  $\alpha^* = 0$, and by Proposition \ref{prop-qk}  $Q_k$ converges to $Q$ when $\alpha^* > 0$ since $\mu(H^1) = \alpha^*/\alpha > 0$.
	
	Applying Benjamini and Yekuteli's original proposition, Theorem \ref{pdrs-thm}, (now we use the PDRS assumption), we have $\E[Q_k] \leq \alpha N_k \#(S_k \cap U)$ for all $k$. 
	By setting $r = 1$ it follows from \eqref{th12-approx2} that $\lim_{k \pil \infty} N_k \#(S_k \cap U) = \mu(U)$, and hence  $\limsup_{k \pil \infty} \E[Q_k] \leq \alpha \mu(U)$.
	
\end{proof}

%%%%% ---- IMS-style bibliography manually inserted.

%\bibliographystyle{plain}
%\bibliographystyle{Paper/ims.bst}
%\bibliography{Paper/bib-fil}

\end{document}